\newtheorem{definition}{Definition}
\newtheorem{remarque}{Remark}%
\newtheorem{theorem}{Theorem}
\newtheorem{corollaire}{Corollary}%
\newtheorem{lemma}{Lemma}
\newtheorem{proposition}{Proposition}
\newcommand{\R}{\mathbb{R}}
\newcommand{\N}{\mathbb{N}}
\newcommand{\X}{\mathcal{X}}
\newcommand{\bint}{\{l,r\}^*}
\definecolor{darkgreen}{RGB}{0,128,43}
\newcommand{\com}[1]{}
\newcommand{\comb}[2]{\binom{#2}{#1}}
\newcommand{\corrA}[1]{{#1}}
\newcommand{\corrB}[1]{{#1}}
\newcommand{\identity}{\text{Id}}
\newcommand{\VL}[1]{#1}
\DeclareMathOperator{\indice}{ind}
\newcommand{\gcirc}{X}
\newcommand{\interb}{I_2}
\newcommand{\setcycle}[1]{\mathcal{C}_#1}
\newcommand{\hs}{\sqrt{\tau}}
\newcommand{\inter}{I}
\newcommand{\shift}{\tau}
\begin{document}

\title{Time arrow without past hypothesis\corrA{: a toy model explanation}} %

\author{Pablo Arrighi}
\author{Gilles Dowek}
\affiliation{%
Université Paris-Saclay, Inria, CNRS, LMF, 91190 Gif-sur-Yvette, France
}%

\author{Amélia Durbec}
\affiliation{%
CNRS, Centrale Lille, JUNIA, Univ. Lille, Univ. Valenciennes, UMR 8520 IEMN, 59046 Lille Cedex, France
}%

\begin{abstract}
The laws of Physics are time-reversible, making no qualitative distinction between the past and the future---yet we can only go towards the future. This apparent contradiction is known as the `arrow of time problem'. Its current resolution states that the future is the direction of increasing entropy. But entropy can only increase towards the future if it was low in the past, and past low entropy is a very strong assumption to make, because low entropy states are rather improbable, non-generic. Recent works from the Physics literature suggest, however, we may do away with this so-called `past hypothesis', in the presence of reversible dynamical laws featuring expansion. We prove that this \corrA{can be} the case in principle, \corrA{within} a toy model. It consists in graphs upon which particles circulate and interact according to local reversible rules. Some rules locally shrink or expand the graph. We prove that almost all states expand; entropy always increases as a consequence of expansion---thereby providing a local explanation for \corrA{the rise of an entropic} arrow of time without the need for a past hypothesis. The discrete setting of this toy model allows us to deploy the full rigour of theoretical Computer Science proof techniques. It also allows for the numerical exploration \corrA{of several physically-motivated variants: a time-symmetric variant; two inflationary variants; and a damping variant---which slows down thermal death. The fact that all of these models exhibit similar behaviours suggests that local reversible expansion mechanisms constitute a robust recipe for a time arrow without past hypothesis. In this qualitative sense, the explanation may therefore also be relevant at the  cosmological level.}  
\end{abstract}

\maketitle

\section{Introduction}
\paragraph{In short.} The main contribution of this paper is a first rigorous proof that an arrow of time can emerge from generic configurations under some time-reversible laws, without the need for a past hypothesis. This problem pertains to a long `foundations of Physics tradition'. The novelty of our approach lies in the deployment of theoretical Computer Science models and techniques, in order to provide an in-principle proof that this is possible. I.e. the problem is first transposed to a well-controlled toy model (in terms of graph dynamics), where the statement is well-formalised (in terms of entropy growth). It is then addressed through the study of invariants and termination proofs.
Next we bringing in more physical considerations, that naturally drive us to explore more complex variants of the model, e.g. in order to cater for time-symmetry; account for an inflationary period; or be able to witness the arrow of time locally. Numerically, we find that these variants also feature a time arrow without past hypothesis, and that `inelastic shocks' help fight thermal death. The latter phenomenon also admits an intuitive explanation relying on a conservation law.

\paragraph{The problem.} Physics laws are time-reversible, i.e. time evolution can be inverted,  making no qualitative distinction between the past and the future. \corrA{At least this is the case of classical mechanics, general relativity, and for quantum mechanics devoid of measurements, e.g. if all systems are considered to be quantum mechanical.}\\
Yet, we clearly experience the fact that we cannot go back to the past. This discrepancy is referred to as the `arrow of time problem'. It sets the requirement for an explanation of ``how does the future versus past phenomenon that we witness in everyday life, arise from time-reversible dynamical laws alone''. I.e. the aim of the game is to resolve this apparent paradox by pinpointing specific physical quantities that can be used as `clocks', and then identifying the future direction to be the direction of increase of that quantity, thereby providing a time arrow.

Usually, these quantities are variants of the concept of entropy, and one shows that starting from a low entropy initial state, entropy typically increases even under a time-reversible law. Thus, in the conventional argument due to Boltzmann \cite{Boltzmann}, ``Time-reversible dynamical law + Past low entropy $\Rightarrow$ Arrow of time''. This argument works even for systems of bounded size (e.g. a few of particles in a box). But it suffers three important drawbacks: $(i)$ By the Poincaré recurrence theorem, as we iterate the dynamics forward, the entropy typically increases\ldots but then drops back, and increases, etc., as the dynamical system is necessarily (almost-)periodic. This problem is usually dismissed on the more practical ground that the recurrence period is beyond cosmological times, but in absolute terms this remains a valid criticism.
 $(ii)$ Starting from an initial low entropy state and iterating the dynamics forward, entropy typically increases (the `entropic clock's arrow' matches that of the external time-coordinate aka ``dynamical clock''). However, had we applied the reversed dynamics instead, i.e. iterating the dynamics backward, entropy typically would have increased, too (the entropic clock's arrow does not match the dynamical clock's arrow) \cite{GoldsteinJanusGlass}, as is quite often overlooked. $(iii)$ The assumption that the initial state be of low entropy, also referred to as `the past hypothesis' \cite{AlbertPastHypothesis} is a very strong one. This is because generic states have maximal entropy. Hence, the presence of such an improbable state at dynamical clock time $0$, remains a mystery, that again demands an explanation. We will review these points in Sec.~\ref{sec:conventional}. 

\paragraph{Comparison with related models.} In \cite{CarrollChen,CarrollChen2}, Carroll and Chen try to fix the third criticism and provide the first plausible intuition why ``Time-reversible dynamical law $\Rightarrow$ Arrow of time''. Their key new ingredient is big bounce (big crunch then bang) then eternal expansion---as featured also in several early attempts to restore the particle-level charge/parity/time-reversal symmetry (CPT) at the scale of the universe \cite{Sakharov,BoyleCPT,HartleHawking}. Their discussion is left informal however, leaving plenty of room to discuss whether it really manages to do away with the past hypothesis \cite{WaldVersusChenCarroll,VilenkinAgainstCarrollChen,WeaverAgainstCarrollChen}. In particular, they themselves raise the issue whether expansion mechanisms are actually compatible with reversibility.

Informally, in a big bouncing universe, matter gets compressed by the big crunch, and released by the big bang. Expansion happens so fast that matter then finds itself out of equilibrium, in a low entropy state. Matter then diffuses and entropy increases without ever reaching a maximum, as expansion is eternal. The entropic clock's arrow thus matches that of the ``size-of-the-universe clock'', which in turn matches that of the dynamical clock after the big bounce. This explanation is compelling. Yet the following direct logical consequence is somewhat mind-bending: before the big bounce occur, the size-of-the-universe clock, and thus the entropic clock, have their arrow opposite to that of the dynamical clock. %
These twin entropic times facing each other have in fact been popularized as `Janus time' by Barbour \cite{BarbourNBody}. Still, as counter-intuitive as it may be, this is already happening in the conventional argument by Boltzmann: as pointed out by Golstein et al. \cite{GoldsteinJanusGlass}, this is just the mentioned criticism $(ii)$. %

Barbour et al. \cite{BarbourNBody,BarbourAgainstZeh} use the $n-$body problem, with (non-local) Newtonian classical gravity turned on, as an enlightening analogy of big-bounce-then-eternal-expansion. The question whether their model really does away with the past hypothesis has been argued in Zeh \cite{ZehAgainstBarbour}, \corrB{for whom: ``the numerically studied examples are not at all typical''}. Indeed as the considered bodies travel on a pre-existing infinite space, the analogy blurs out the requirement of finite but unbounded configurations, which is needed to make the argument rigorous. Moreover, the entropic clock (quantity measuring the microscopic disorder) is replaced by a ``shape complexity clock'' (quantity measuring the macroscopic clumping) in this strand of works. This is non-standard and arguable on the basis that in many situations there is no need for gravitational clumping in order to observe an arrow of time.
The question whether expansion can be implemented as a local reversible mechanism also remains open in this strand of works \cite{KoslowskiThroughBigBang}.

The aim of this paper is to exhibit rigorously-defined local reversible dynamical laws (the local rule of reversible causal graph dynamics) for which we can prove that:
\begin{itemize}
\item Almost all states end up growing in size as we iterate the dynamics.
\item Entropy always increases as size grows.
\end{itemize}
\corrA{This entropy is rigorously-defined in a way that is analogous to that of perfect gas. Namely, it is given by the log of the number of graphs corresponding to two macroscopic properties, namely the size of the graph and its number of particles.} Thus we prove that an entropic clock direction emerges without the need to assume past low entropy. In other words the arrow of time is established from local reversible expansion mechanisms alone, doing away with the past hypothesis. This works because size as a function of dynamical clock time is almost always $U$-shaped. As almost all states are somewhere on this $U$-curve, their size will end up growing, and their entropy will end up increasing. They will do so forever, as configurations are of finite but unbounded size. Of course almost all states have, somewhere along the dynamical clock timeline to which they belong, some states of smallest size and lowest entropy, which may be dubbed as ``initial''. These particular states are non-generic, just like the minimum of any $U$-curve is non-generic. Because the $U$-curve is due to the dynamics alone, their existence is the result of dynamics alone.

\begin{figure}%
\centering
\includegraphics[scale=1.40]{figures/Time_arrow/circular_graph_example.pdf}
\caption{\emph{A configuration.} \corrA{All configurations are cyclic graphs. The nodes are connected via ports $a$ or $b$. At the nodes, each} full half-disk represents the presence of a particle that is about to hop along the port \corrA{it is facing. The names identifying each node are omitted in this picture.}} 
\label{circular_graph}
\end{figure}

\begin{figure}[t]
\centering
\begin{subfigure}{0.95\columnwidth}
\centering
\includegraphics[width=0.95\columnwidth]{figures/Time_arrow/intuitive_M_restaure.pdf}
\caption{\emph{Step $\hs $.} \corrA{Here the $x_i$ are arbitrary bits coding for the presence of a particle, or not, e.g. $x_1=1$ would mean a full half-disk at position $t$ facing the $b$ port. Right beneath the nodes are their names.} Each particle hops by half an edge. In other words, edges become vertices and vertices become edges, whilst particles follow the movement and keep their orientation. \corrA{Notice that $(\hs )^2=\tau$, i.e. two iterations produce just the partial shift where for instance $x_1$ will find itself at the $u$ node, still facing the $b$ port. Information is therefore conserved through this step.}}
\label{sqrtS_intuitive}
\end{subfigure}\\
\begin{subfigure}{0.95\columnwidth}
\includegraphics[scale=2]{figures/Time_arrow/intuitive_I.pdf}
\caption{\emph{Step $\inter$}. All occurrence of the two patterns are flipped for each other synchronously. This is unambiguous, as the patterns do not overlap. \corrA{This step is therefore involutive.}}
\label{fig:intuitive_I}
\end{subfigure}
\hfill
\caption{\label{fig:rulestoymodel}\emph{Rules of the toy model.}}
\end{figure}

Our toy model is set in $1+1$ spacetime. It consists in circular graphs upon which particles (Fig. \ref{circular_graph}) move and interact when they are closeby (Fig. \ref{fig:rulestoymodel}). During those interactions, some patterns are interchanged locally, triggering shrinking or expansion of the circle. It is cast in the framework of reversible causal graph dynamics \cite{ArrighiRCGD,ArrighiCreation} and is inspired by the Hasslacher-Meyer model \cite{MeyerLGA}, for which there is numerical evidence of a $U$-shaped size curve, but no proof---this seems inherently hard to prove in fact \cite{RabinMeyer}. There is no mention of the arrow of time nor entropy in their paper; moreover the sense in which it is reversible and causal is left informal and seems incompatible with quantum mechanics \cite{amelia_quantum_graphs}. Instead, our toy model enjoys rigorous proofs of $U$-shaped size and entropy curves, as well as rigorous notions of reversibility \cite{DurbecIMRCGD} and causality (i.e. making sure that information propagates at a bounded speed with respect to graph distance), readily allowing for a quantum extension \cite{ArrighiCreation}. These results are provided in Sec.~\ref{sec:main}.

As far as we know, there are no other closely related models besides the above-mentioned ones. This may be because 1/ The fact that a time-reversible model always grows is somewhat counter-intuitive, 2/ Let alone proving it---as shown by the efforts of \cite{RabinMeyer}. 3/ Interdisciplinary work is not so common on this topic. In particular, the theoretical Computer Science analysis that we deploy is a new player. We hope it may bring simplicity, clarity and rigour to a long-standing issue---as did many toy models in Physics. Moreover, we believe that computational nature of such a model is advantageous per se, as it allows for the numerical exploration of more complex variants, featuring more Physics. 

Indeed, much Physics is not only time-reversible but also time-symmetric, i.e. invariant under the change of variable $t=-t'$, thereby making no distinction at all between the past and future time directions. We provide numerical evidence that there is a time-symmetrized extension or our model that has $U$-shaped size curve and entropy curves in Sec.~\ref{sec:time-symmetry}.

The toy model features quadratic expansion, whereas the very early universe is believed to have experienced a short exponential expansion phase referred to as `inflation'. We provide numerical evidence that a slight variant of the toy model features an exponential $U$-shaped size curve in Sec.~\ref{sec:exponential}.

The toy model features quick `thermal death'. Once thermal thermal death is reached, the arrow of time can no longer be witnessed locally. We provide numerical evidence that just by adding inelastic collisions of matter to the model (and triggering the emission of radiation into free space so as to keep things time-reversible) thermal death gets delayed, see Sec.~\ref{sec:localentropy}.

Summary and perspectives are given in Sec.~\ref{sec:conclusion}. Notice that this article is the long, Physics-oriented journal version of a short, Computer Science-oriented conference proceeding \cite{ArrighiTimeArrowRC} focused on Sec. \ref{sec:main}.

\section{The  conventional argument}\label{sec:conventional}

{\em Entropy.} Entropy was defined by Boltzmann in the 1870s:

\begin{definition}\cite{boltzmannVorlesungenUberGastheorie1896}
The entropy $S$ of a macroscopic state is defined by :
\begin{align} S=k.\ln{\Omega}\label{eq:boltzmann}
\end{align}
with $k$ the Boltzmann constant and $\Omega$ is the number of \corrB{accessible microscopic configurations} corresponding to the macroscopic state.
\end{definition}

In this definition the word `macroscopic state' refers to a set of values for the macroscopic properties of the system, such as its temperature, pressure, volume or density. 
Given a certain macroscopic state, a `statistical ensemble' is a way to assign a probability distribution to the set of microscopic states that correspond to the macroscopic state. It is often reasonable to assume that the probability distribution be uniform (aka a `microcanonical ensemble'). The probability $p_i$ of a microscopic state $x_i$ is then $p_i=\frac{1}{\Omega}$. The connection between the Shannon entropy \cite{shannonMathematicalTheoryCommunication1948} of this probability distribution, and the Boltzmann entropy of the macroscopic state, is then obvious: \begin{align} S = - k \sum_{i=0}^{\Omega-1} p_i \ln{(p_i)} = - k \sum_{i=0}^{\Omega-1} \frac{1}{\Omega}\ln{(\frac{1}{\Omega})} = k.\ln{\Omega}\end{align}

For a dynamical system over the state space $X$, the microscopic states are simply the configurations $x_i \in X$ of the system. We formalize the macroscopic states as equivalence classes on $X$. The entropy function associates, to each microscopic configuration, the entropy of its macroscopic state.                 

\begin{definition}
Consider $X$ a set and $\equiv$ an equivalence relation on $X$. We define the entropy function $S: X \to\R$ associated with $\equiv$ as:
\begin{align}  S(x)=\ln{(|[x]|)}\end{align}
with $[x]$ the $\equiv$-equivalence class of $x$ and $|[x]|$ is its size.
\end{definition}

{\em The case of bounded size dynamical systems.}\label{sec:periodic_entropy}
The second principle of thermodynamics states that ``entropy increases in time''. 
However, even for isolated, bounded size dynamical systems, the situation is not so obvious:

\begin{remarque}\label{periodic_entropy}
Let $X$ be a finite state space and $f: X \rightarrow X$ a bijection. For any entropy function $S$, and for any configuration $x\in X$, the sequence $(S(f^{n}(x)))_{n \in N}$ is periodic because the sequence $f^{n}(x)_{n \in N}$ is periodic. This implies that the sequence of entropy variations $(S(f^{n+1}(x))-S(f^{n}(x)))_{n\in N}$ is itself periodic. If $(S(f^{n}(x)))_{n \in N}$ is not constant, then these entropy variations can be negative.
\end{remarque}

How can we justify, then, that when we dilute a drop of dye in a sealed glass of water, entropy seems to just rise, unambiguously indicating an arrow of time? Besides the fact that in practice the glass is not quite a isolated system, and hence undergoes a not quite time-reversible dynamics, several other assumptions are implicit in this emblematic experiment.

First, the duration of such an experiment will be far too short to observe periodicity. \corrB{To back this point, we remark that for a glass of water, based on an estimate of the molar entropy of liquid water at standard conditions, the Boltzmann Eq. \eqref{eq:boltzmann} yields $\Omega\sim e^{10^{25}}$. Thus, any reasonable discretization of a such macroscopic system into distinct microscopic states will lead to $\sim e^{10^{25}}$ of them. Consequently, any reasonable estimate of how long it will take to forcibly get back to one such microscopic state will again be of that order. Yet, it is estimated that only $10^{60}$ Planck times have elapsed since the Big Bang. So, the age of the universe is negligible before the estimated recurrence time.}

Second, the drop of dye would likely have diluted just as well if it had undergone the time-symmetrized versions of Physics laws instead. In other words, entropy typically does increase when we start from a low entropy initial configuration\ldots  but it does so in both directions of the dynamical clock \cite{GoldsteinJanusGlass}. 

Third, we must realise that the experiment starts at a rather improbable time: the first $10^{60}$ ticks of dynamical clock time represent a negligible fraction of the $\sim e^{10^{25}}$ entropy recurrence period. Had current time been picked up at random within the period, there would be no reason to expect it to be a time of increase of entropy, rather than of decrease. Another way to say this is that the experiment starts from an improbable configuration. Indeed in any generic configuration the dye is diluted already; entropy is almost maximal already; and the entropy variation is zero on average, independently of $t$ the number of steps between the two observations:
 \begin{align} \sum_{x\in\Sigma} \left(S(f^{t}(x))-S(x)\right) = \sum_{x\in\Sigma} S(x)-\sum_{x\in\Sigma} S(x) = 0.\end{align}
In order to witness an arrow of time, we must start from a low entropy configuration, but in practice the equivalence relation and therefore the entropy function are chosen so that low-entropy configurations are non-generic.

{\em Past hypothesis.}
So, to this day, phenomena such as the dilution of the drop of dye in a glass of water, and the increasing entropy therein, are paradigmatic of current understanding of the arrow of time problem\ldots and yet, a careful inspection of the assumptions underlying the conventional argument shows that it only displaces the problem. The question ``Why do we observe an arrow of time'' has become ``Why was the Universe originally of low entropy?''. In fact the conventional argument requires that the entropy at the Big Bang be so low that an arrow of time is still observable $\sim 13.7$ billion years later---making it a very strongly non-generic configuration. This strong assumption of a low entropy initial configuration is referred to as the `past hypothesis' \cite{AlbertPastHypothesis}, and was criticised right from its birth, on account of this unlikelihood \cite{Boltzmann}. Luckily, more recent accounts of the arrow of time suggest we could do without it \cite{CarrollChen,CarrollChen2,BarbourNBody,BarbourJanus}. The key ingredient is expansion.

\section{Time arrow without past hypothesis proof}
\label{sec:main}

In this section we prove via a toy model that an entropic arrow of time can originate from expansion, and that this expansion can be implemented locally and reversibly. In this model Remark~\ref{periodic_entropy} does not apply because, although each configuration is finite, the state space itself is infinite, as configurations can grow. That Remark~\ref{periodic_entropy} can be circumvented in an infinite state space is not surprising by itself: think of $\N$ the set of integers for instance. Each number can be written with a finite number of digits, but the set itself is of course infinite, and it is easy to define a non-periodic bijection on the set:
\begin{align} 
f(n)=
\begin{cases}
0 \text{ if } n=1\\
n+2  \text{ if } n \text{ is even }\\
n-2  \text{ else }
\end{cases}
\end{align}
What is much less obvious and harder to prove is the existence of a time-reversible, causal, homogeneous, ultimately expanding dynamics on generic configurations. We do not know of another such model. %

\subsection{State space}
The states of the model have one dimension, these are circular graphs. These circles are of unbounded but finite size, i.e. the infinite line is not allowed. The vertices are equipped with ports $a$ and $b$, and edges go from port to port, each being used exactly once, as in Fig.~\ref{circular_graph}.\\
Each vertex carries an internal state, amongst four possible states: `containing no particle', `containing a particle moving along port $a$', `containing a particle moving along port $b$', `containing two particles'. Notice that this set of internal states is the same as that used to model electrons in gas-on-grid methods \cite{hardyTimeEvolutionTwoDimensional1973}, or in quantum walks to represent the spin of a fermionic particle such as the electron \cite{ArrighiOverview}. Later we will generalize this by associating, to each port of each node, not just one information bit, but two or three.\\

There is one subtlety: vertices are named, and these names form a little algebra. This is so that a vertex $u$ may be able to split into $u.l$ and $u.r$, and later merge back into $u.l \vee u.r$, and that this be in fact the same as just $u$. There is no escaping this formalism in order to achieve both reversibility and local vertex creation/destruction \cite{ArrighiCreation}, particularly if one wants to preserve causality in the quantum regime \cite{amelia_quantum_graphs}. In order to remain self-contained, the full definition of these named graphs is provided in  Appendix~\ref{sec:namedgraphs}.\\
We denote by $\mathcal{C}_n$ the set of circular graphs with $2n$ information bits per vertex ($n$ per port).  For any vertex $x$, we denote  by $p_i(x)$ the value of the $i$-th bit associated to the port $p$ of vertex $x$. 

\subsection{The toy model}

Our main model ($\hs \inter$) is inspired by the Hasslacher-Meyer dynamics \cite{MeyerLGA}. It acts on the set $\mathcal{C}_1$ and consists in composing two steps: $\inter$, then $\hs$. Each of them is a reversible causal graph dynamics, thus so is their composition---the reader can refer to \cite{ArrighiRCGD,ArrighiCreation} for further theoretical aspects about reversible causal graph dynamics, including general definitions. Thus the whole dynamics is a time-reversible. It consists in a composition of steps which, taken individually, are time-symmetric \cite{GajardoTSCA}:

\begin{definition}
A causal graph dynamics $f$ is said to be \emph{time-symmetric} if and only if there exists a causal graph dynamics $T$ such that $T^2=\identity$ and $TfT = f^{-1}$.
\end{definition}

{\em Step $\hs $.}
Let $\shift$ be the operation that moves all particles along their corresponding port. The operation $\hs$ is such $\hs  \circ \hs  = \shift$: it moves particles by half an edge instead, see Fig.~\ref{sqrtS_intuitive}. One way of thinking about this operation is as inverting the roles of edges and nodes in the sense of taking the dual graph. Notice that alternatively, we could have used a (renaming-equivalent but less symmetrical) operation that moves only those particles associated to port $b$, but by a whole edge.

\begin{definition}[$\hs $]
Step $\hs $ is defined for any graph $\gcirc \in \mathcal{C}_n$ as follows:
\begin{itemize}
\item $V(\hs (\gcirc))=\{u.r\vee v.l\mid\{u:b,v:a\} \in E(\gcirc)\}$
\item $E(\hs (\gcirc))=\{\{x':b,y':a\}\mid u,x,y \in V(\gcirc) \text{ and } x'=u.r \vee x.l  \text{ and } y'= y.r \vee u.l\}$
\item $\forall x' \in V(\hs (\gcirc))$, and for all $i \in [1,n], a_i(x)=a_i(u)$ and $b_i(x)=b_i(v)$, where $u$ and $v$ are the vertices of $\gcirc$ such that $x'.l=u.r$ and $x'.r=v.l$ respectively.
\end{itemize}
\end{definition}

This step is both time-reversible and time-symmetrical, since
\begin{align}  \hs  ^{-1} = T \hs  T\end{align}
with $T$ the function exchanging the left and right information bits of each vertex.

{\em Step $\inter$.} 
Step $\inter$ consists in splitting a vertex into two when it holds two particles, or conversely  merging two vertices holding a pair of back-to-back particles, as in Fig.~\ref{fig:intuitive_I}. Thus, a vertex $u$ such as $a_{1}(u)=b_{1}(u)=1$ will produce two vertices $u.l$ and $u.r$ with $a_{1}(u.l)=b_{1}(u.r)=1$ and $b_{1}(u.l)=a_{1}(u.r)=0$. Conversely, two vertices $u$ and $v$, with $a_{1}(u)=b_{1}(v)=1$ and $b_{1}(u)=a_{1}(v)=0$ will merge into a vertex $u \vee v$ such as $a_{1}(u \vee v)=b_{1}(u \vee v)=1$.

This step is obviously time-reversible and time-symmetric, since it is involutive: $\inter^2=\textit{Id}$.

{\em Spacetime diagram.} The evolution of a configuration under toy model $\hs \inter$ can be represented in the form of a spacetime diagrams, e.g. Fig.~\ref{fig:diagram-espace-simple} represents $\hs\inter$. In these diagrams, the spatial dimension is represented horizontally, and dynamical clock time is represented vertically downwards. Each vertex is represented by a cell, separated from its neighbours with a vertical black line. Its internal state of a cell is captured by its colour. The cells are depicted in variable-sizes, allowing each split/merge to be done ``on the spot''.

\begin{figure}[b]\centering
\centering
\includegraphics[width=0.95\columnwidth]{script_figure_these/sqrtSH/space_time_sqrtSH_axis.pdf}
\caption{\label{fig:diagram-espace-simple}\emph{Spacetime diagram of dynamics $\hs\inter$.} \corrA{Dynamical clock time flows towards the bottom.} Particles moving along port $a$ (resp. $b$) are represented green (resp. blue). \corrA{When they cross, $\inter$ gets applied, leading to shrinking or expansion of the graph. Expansion wins: cells are smaller at the bottom, their packing leading to this ``curtain folds'' appearance.} %
}
\end{figure}

\subsection{Size increases}

The first observation that can be made from Fig.~\ref{fig:diagram-espace-simple} is that the dynamics $\hs \inter$, although time-reversible, grows the size of the graph\corrA{---in the sense of spatial expansion à la general relativity}. It does not grow from the borders, there are no borders, \corrA{space} expands locally. \corrA{It does not grow the number of particles either: those are conserved.} The numerics in Fig.~\ref{fig:evolution_typique_taille_si} suggest this is typical. We will now prove that this happens for almost all initial states. More precisely, we will prove that graphs always end up growing, and that the growth is strict as soon as they contain at least one particle of each type.

\begin{figure}[h]
\centering
\begin{subfigure}{0.95\columnwidth}
\centering
\includegraphics[width=0.95\columnwidth]{script_figure_these/sqrtSH/SH_classic_Taille_100_en.pdf}
\caption{\emph{Size of graphs.}}
\label{fig:evolution_typique_taille_si}
\end{subfigure}\\
\begin{subfigure}{0.95\columnwidth}
\centering
\includegraphics[width=0.95\columnwidth]{script_figure_these/sqrtSH/SH_classic_Entropie_globale_100_en.pdf}
\caption{\emph{Global entropy of graphs.}}
\label{fig:global_entropy}
\end{subfigure}
\hfill
\caption{Typical size and entropy curves for dynamics $\hs\inter$. The horizontal axis represents the number of steps of the dynamics, aka dynamical clock time.
The initial configuration is drawn uniformly at random amongst all graphs of size $100$.}
\end{figure}

Intuitively this is due to the fact that vertex merger only occurs in the presence of a pattern which is unstable:

\begin{lemma}[Merger Instability]\label{lem:fusion_instability}
Let $\gcirc \in \mathcal{C}_1$ be a circular graph. Given a pair $u$ and $v$ of adjacent vertices of $\gcirc$, these are said to belong to a merger pattern if and only if $a_1(u)=b_1(v)=1$ and $b_1(u)=a_1(v)=0$. For any $u,v$ forming a merger pattern in $X$, there are two vertices $u.r, v.l \in V((\hs \inter)^{-1}(\gcirc))$ such that $u.r$ and $v.l$ form a merger pattern.
\end{lemma}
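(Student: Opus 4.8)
The plan is to unfold the inverse dynamics and reduce the statement to a single application of step $\inter$. Since $\inter$ is involutive and $\hs^{-1}=T\hs T$ (with $T$ exchanging the left and right information bits of every vertex), one has $(\hs\inter)^{-1}=\inter\circ\hs^{-1}=\inter\circ T\circ\hs\circ T$, so it is enough to describe the intermediate graph $\hs^{-1}(\gcirc)=T(\hs(T(\gcirc)))$ and then read off what $\inter$ does to it. Throughout, the merger pattern on $u,v$ is the edge $\{u:b,v:a\}\in E(\gcirc)$ carrying $a_1(u)=b_1(v)=1$ and $b_1(u)=a_1(v)=0$.

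First I would locate the vertex of $\hs^{-1}(\gcirc)$ that step $\inter$ is going to split. Step $\hs$ turns edges of its argument into vertices, and the definition of $V(\hs(\cdot))$ forces the edge $\{u:b,v:a\}$ to be sent to the vertex $u.r\vee v.l$, whose left and right children are exactly $u.r$ and $v.l$. Then I would chase the four bit values: pushing $a_1(u)=b_1(v)=1$, $b_1(u)=a_1(v)=0$ through the inner $T$, then through the port-transfer rule of $\hs$, then through the outer $T$, they arrive as $a_1(u.r\vee v.l)=b_1(u.r\vee v.l)=1$ in $\hs^{-1}(\gcirc)$ --- that is, this vertex now holds two particles, i.e. it is a split pattern. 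Applying step $\inter$ then splits $u.r\vee v.l$, and by the definition of $\inter$ the two children satisfy $a_1(u.r)=b_1(v.l)=1$ and $b_1(u.r)=a_1(v.l)=0$, joined by the fresh edge $\{u.r:b,v.l:a\}$: precisely a merger pattern in $(\hs\inter)^{-1}(\gcirc)$. Since the merger and split patterns of $\inter$ are pairwise disjoint (as used in its very definition), this computation is purely local and no further assumption on $\gcirc$ is needed.

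The step I expect to be the main obstacle is the bookkeeping inside the named-graph algebra: checking that the children of the intermediate vertex come out named exactly $u.r$ and $v.l$ (rather than with the roles of $l/r$, or of $u$ and $v$, interchanged), and carrying the two $T$-swaps correctly through the port-transfer rule of $\hs$ so that the four bits land in the right slots. One should also handle separately the degenerate small circles --- size $1$ and size $2$ --- where a vertex may be its own neighbour, or where $u$ and $v$ are joined by two distinct edges; the same chain of identifications goes through, but the incidences have to be verified by hand.
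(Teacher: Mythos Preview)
Your proposal is correct and is essentially the paper's own argument made explicit: the paper's proof reads, in its entirety, ``By inspection of Fig.~\ref{fig:proof_instability}, which represents the pre-image of a merger pattern'', and your algebraic unfolding of $(\hs\inter)^{-1}=\inter\circ T\circ\hs\circ T$, tracking the four bits onto the intermediate vertex $u.r\vee v.l$ and then through the split, is precisely that inspection carried out symbolically. Your side remarks on the name-algebra bookkeeping and on the degenerate small circles go beyond what the paper itself records.
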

\begin{proof}
By inspection of Fig.~\ref{fig:proof_instability}, which represents the pre-image of a merger pattern.
\end{proof}

Let us fix notations before we state the expansion theorem. First, in what follows we will write $(u_n)$ to designate the sequence $(u_n)_{n\in \N}$ in the absence of ambiguity. Second, we will use the following asymptotic notation:
\begin{definition}
Let there be two sequences $(u_n)$ and $(v_n)$. We say that $(u_n)$ is of the order of $(v_n)$, and write $(u_n) = \Theta ((v_n))$ if there exist positive numbers $a, b \in \R$ and $n_0 \in \N$ such that for all $n \geq n_0$, $a.v_n \leq u_n \leq b.v_n$.
\end{definition}

\begin{theorem}[Expansion]\label{th:sqrtSI_growing}
For any $\gcirc\in \mathcal{C}_1$ containing at least one particle of each type, let $u_n=|V\left((\hs \inter)^n(\gcirc)\right)|$. We have $\left(u_n\right)=\Theta ((\sqrt{n}))$.
\end{theorem}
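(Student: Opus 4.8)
The plan is to control the net number of ``split'' events. Write $\gcirc_n = (\hs\inter)^n(\gcirc)$ and $m_n = |V(\gcirc_n)|$; since $\hs$ preserves the vertex count and $\inter$ changes it by $s_n - f_n$, where $s_n$ is the number of two‑particle vertices of $\gcirc_n$ and $f_n$ the number of merger patterns of $\gcirc_n$, we get $m_n = m_0 + \sum_{k<n}(s_k - f_k)$ with $0\le s_k, f_k \le \mu := \min(p,q)$, where $p$ (resp.\ $q$) is the number of port‑$a$ (resp.\ port‑$b$) particles---both invariant under $\hs$ and $\inter$, and both $\ge 1$ by hypothesis. First I would use Lemma~\ref{lem:fusion_instability}: since every merger pattern of $\gcirc_n$ is the image of a merger pattern of $\gcirc_{n-1}$ under the injective map $(u,v)\mapsto(u.r,v.l)$, the sequence $(f_n)$ is non‑increasing; and a short inspection of the action of $\hs\inter$ on a merger pattern shows that resolving one never produces a fresh one ``in its place'', so $(f_n)$ cannot stabilise at a positive value. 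Hence there is $N_0$ with $f_n = 0$ for $n\ge N_0$; in particular $F := \sum_n f_n < \infty$, the sequence $(m_n)$ is non‑decreasing for $n\ge N_0$, and $m_n = M + \Sigma(n)$ with $M := m_{N_0}\ge 1$ and $\Sigma(n) := \sum_{N_0\le k<n} s_k$ the number of splits after $N_0$.

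The core is a displacement bookkeeping for a fixed pair consisting of one port‑$a$ particle $\alpha$ and one port‑$b$ particle $\beta$ (both exist since $p,q\ge1$). Since $\alpha$ and $\beta$ counter‑propagate, on the (dynamically resized) circle I would track an integer $D_{\alpha\beta}(n)$, the total number of cell‑boundaries $\beta$ has crossed relative to $\alpha$ during steps $N_0,\dots,n-1$, together with $W_{\alpha\beta}(n)$, the number of those steps at which $\alpha$ and $\beta$ share a vertex (each such coincidence is a two‑particle vertex, hence a split). Unwinding the half‑edge shift of $\hs$ and the cell insertions/deletions of $\inter$, one gets, for $n\ge N_0$,
\[
D_{\alpha\beta}(n) = \rho\,(n-N_0) + A_{\alpha\beta}(n) - B_{\alpha\beta}(n) + O(1),
\]
where $\rho>0$ is the constant relative drift rate and $A_{\alpha\beta}(n)$ (resp.\ $B_{\alpha\beta}(n)$) counts the splits (resp.\ merges) whose inserted (resp.\ removed) cell lies on the arc from $\alpha$ to $\beta$; thus $0\le A_{\alpha\beta}(n)\le\Sigma(n)$ and $0\le B_{\alpha\beta}(n)\le F$. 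On the other hand, between two consecutive coincidences of $\alpha$ and $\beta$, $\beta$ must wind once around the current circle, whose size is at least $M$ plus the number of splits already performed, so
\[
\textstyle\sum_{j=1}^{W_{\alpha\beta}(n)} (M+j-1)\ \le\ D_{\alpha\beta}(n)\ \le\ \bigl(W_{\alpha\beta}(n)+1\bigr)\,m_n .
\]
Finally, each two‑particle vertex at each step contains exactly one port‑$a$ and one port‑$b$ particle, so $\sum_{\text{pairs }(\alpha,\beta)} W_{\alpha\beta}(n) = \Sigma(n) = m_n - M$.

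For the lower bound I would combine $D_{\alpha\beta}(n)\ge \rho(n-N_0) - F - O(1)$ with $D_{\alpha\beta}(n)\le \bigl(W_{\alpha\beta}(n)+1\bigr)m_n \le (m_n+1)m_n$ (using $W_{\alpha\beta}(n)\le\Sigma(n)\le m_n$), obtaining $m_n(m_n+1)\ge \rho(n-N_0)-F-O(1)$, hence $m_n = \Omega(\sqrt n)$. For the upper bound I would use $\sum_{j=1}^{W_{\alpha\beta}(n)}(M+j-1)\ge \tfrac12 W_{\alpha\beta}(n)^2$ together with $D_{\alpha\beta}(n)\le \rho n + \Sigma(n) + O(1)$ to get $W_{\alpha\beta}(n)^2 \le 2\rho n + 2\Sigma(n) + O(1)$ for each of the $pq$ pairs; summing over pairs and applying Cauchy--Schwarz, $\Sigma(n)^2 = \bigl(\sum_{\alpha\beta} W_{\alpha\beta}(n)\bigr)^2 \le pq\sum_{\alpha\beta} W_{\alpha\beta}(n)^2 \le (pq)^2\bigl(2\rho n + 2\Sigma(n)+O(1)\bigr)$, a quadratic inequality in $\Sigma(n)$ forcing $\Sigma(n) = O(\sqrt n)$, hence $m_n = M + \Sigma(n) = O(\sqrt n)$. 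Together these give $|V((\hs\inter)^n(\gcirc))| = \Theta(\sqrt n)$.

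I expect the main obstacle to be the displacement identity itself: setting up coordinates for particles on a circle whose cells are continually created and destroyed, and verifying that one $\hs\inter$ step changes $D_{\alpha\beta}$ by exactly the drift term plus the in‑arc split/merge correction. This requires handling the precise port and half‑edge conventions of Appendix~\ref{sec:namedgraphs} (which side a split inserts its new cell relative to the tracked particles, and how the duality in $\hs$ relabels positions). The companion subtlety is the precise argument that $(f_n)$ eventually vanishes, which likewise rests on the exact action of $\hs\inter$ on a merger pattern. The remaining steps are elementary summation and a quadratic estimate.
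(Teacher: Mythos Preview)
Your route is genuinely different from the paper's, though both rest on the same physical picture: once merger patterns are gone, each counter‑propagating pair collides (hence splits) about once per revolution, and since revolutions take time proportional to the current size, size grows like $\sqrt{n}$. The paper formalises this directly on the size sequence $u_n:=|V((\hs\inter)^n(\gcirc))|$: after a time $m$ where all merger patterns are gone it proves the sandwich $u_n+1\le u_{n+u_n}\le u_n+c$ with $c=2n_an_b$, then builds a subsequence $v_k=u_{\mathrm{ind}(k)}$ with $\mathrm{ind}(k)=\sum_{i<k}v_i$, shows $v_k$ grows linearly and $\mathrm{ind}(k)$ quadratically, and sandwiches $u_n$ between two consecutive $v_k$'s. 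This avoids any coordinate system on a dynamically resized circle. Your pairwise displacement bookkeeping is more granular and, once set up, makes the $\sqrt{n}$ estimates transparent (indeed Cauchy--Schwarz is unnecessary: from $\tfrac12 W_{\alpha\beta}^2\le D_{\alpha\beta}$ you get $\Sigma(n)=\sum_{\alpha,\beta}W_{\alpha\beta}\le pq\sqrt{2\rho n}$ directly), but it buys this at the cost of the invariant you yourself flag as the main obstacle.

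Two points to tighten. First, your argument that $(f_n)$ eventually vanishes does not yet work: non‑increase plus ``resolving one never produces a fresh one'' still allows a merger pattern that is never resolved at all to persist forever, so $(f_n)$ could stabilise at a positive value. What is missing is a reason every merger pattern is eventually hit. The paper supplies this via a lexicographic decrease on the pair $\bigl(n_f(\gcirc_n),\,d_f(\gcirc_n)\bigr)$, where $d_f$ is the minimum distance from a merger pattern to a particle moving towards it (counting a pattern's own particles, which approach it around the far side of the circle): whenever $n_f$ stays constant, $d_f$ strictly drops. Second, the in‑arc correction terms $A_{\alpha\beta},B_{\alpha\beta}$ in your displacement identity are superfluous and are exactly where the half‑edge/duality conventions of $\hs$ will bite: inserting or removing a cell on an arc does not change the number of boundaries \emph{already} crossed, so after $N_0$ one simply has $D_{\alpha\beta}(n)=\rho(n-N_0)+O(1)$, which is all your two estimates use.
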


\begin{proof}(Outline). As proved in Lem.~\ref{lem:fusion_instability} dynamics $\hs\inter$ cannot create a merger pattern as they are stable by $(\hs \inter)^{-1}$. This entails that any interference disrupting a merger pattern will permanently destroy it. We then prove, by means of a strictly decreasing measure, that such interference will occur if and only if the graph contains at least two particles going in opposite direction. Lastly we  quantify the growth rate once all merger patterns have been removed from the graph, placing bounds corresponding to the best and worst case scenarios. The proof technique is thus akin to a program termination and complexity analysis proof. For readability, the technical details are given in Appendix~\ref{sec:expansionproof}.
\end{proof}

As can be seen in Fig. ~\ref{fig:evolution_typique_taille_si}, for a randomly chosen configuration, the asymptotic regime is reached quickly and is quite stable.

\begin{figure}%
\centering\includegraphics[scale=2]{figures/Time_arrow/intuitive_proof_unstable_fixed_point.pdf}
\caption{\emph{Instability of merger patterns under $\hs\inter$.}}\label{fig:proof_instability}
\end{figure}
\newcommand{\entropyp}{S'}
\subsection{Entropy increases with size}
It turns out that growth in the size of the graph implies growth of entropy, for a natural notion of entropy.\\
Indeed from this point on, we will focus on the entropy function associated with the following equivalence class: two configurations are considered equivalent if and only if they have the same size and the same number of particles. This entropy function can be seen as analogous to the one used in the study of perfect gases.\\
This is because the entropy of a perfect gas configuration is generally associated with the macroscopic properties of pressure, volume, temperature and number of particles. As these four variables are related by the perfect gas law, three are independent. Moreover, in our toy model the speed of the particles is constant, which makes it unnecessary to consider the temperature. We therefore have to consider just two variables amongst: the number of particles; the size of the graph (analogous to the volume); and the density of particles (analogous to the pressure).

We ignore the names of the vertices when counting the number of microscopic configurations corresponding to each macroscopic state, i.e. when counting the number of graphs of a given size and having a given number of particles. We denote $\comb{p}{n}$ the binomial coefficient $p$ among $n$, i.e. $\comb{p}{n} = \frac{n!}{p!(n-p)!}$.

\begin{definition}\label{def:entropy}
The entropy function $\entropyp$ is defined by $\entropyp(\gcirc) = \log(\comb{p}{2|V(\gcirc)|}$ where $p$ is the number of particles in $\gcirc$.
\end{definition}

One could also have chosen to refine the macroscopic states by separating the two type of particles, but this will lead to the same behaviours.
In the case of $\hs \inter$ we have proven in Th.~\ref{th:sqrtSI_growing} that $V(\gcirc)$ grows as a square root. Because this rule preserves the number of particles we automatically obtains the growth of entropy. But we can be more precise:

\newcommand{\orderof}{\Theta}
\begin{corollaire}\label{cor:growing_global_entropy}
For any $\gcirc\in \mathcal{C}_1$ containing at least one particle of each type, let $e_n=\entropyp(\hs \inter)^n(\gcirc)$. We have $(e_n)=\Theta((\log(n)))$.
\end{corollaire}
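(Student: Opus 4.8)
The plan is to derive the corollary directly from Theorem~\ref{th:sqrtSI_growing} together with a Stirling-type estimate on the binomial coefficient. The key point is that the number of particles $p$ is a conserved quantity under $\hs\inter$ (step $\hs$ merely translates particles, and step $\inter$ converts a split node carrying two particles into two nodes each carrying one, and vice versa, so the total particle count never changes), so $p$ is fixed throughout the orbit while $|V((\hs\inter)^n(\gcirc))|$ grows. Write $p\ge 2$ (since $\gcirc$ contains at least one particle of each type, it has at least one $a$-particle and one $b$-particle) and set $N_n := |V((\hs\inter)^n(\gcirc))|$.

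First I would record that by Theorem~\ref{th:sqrtSI_growing} there are constants $a,b>0$ and $n_0\in\N$ with $a\sqrt{n}\le N_n\le b\sqrt{n}$ for all $n\ge n_0$; in particular $N_n\to\infty$. Next I would bound $\binom{p}{N_n}$ from above and below as a polynomial in $N_n$ of degree exactly $p$. For the upper bound, $\binom{p}{N}\le N^p$ for all $N\ge p$. For the lower bound, for $N\ge 2p$ one has $\binom{p}{N}=\frac{N(N-1)\cdots(N-p+1)}{p!}\ge \frac{(N/2)^p}{p!}$, since each factor $N-i\ge N/2$ once $N\ge 2p$. Combining, for all sufficiently large $n$ (so that $N_n\ge 2p$, which holds eventually since $N_n\to\infty$),
\[
\frac{N_n^{\,p}}{2^p\,p!}\;\le\;\binom{p}{N_n}\;\le\;N_n^{\,p}.
\]
Taking logarithms, $p\log N_n - (p\log 2 + \log p!) \le \entropyp((\hs\inter)^n(\gcirc)) \le p\log N_n$, so $\entropyp((\hs\inter)^n(\gcirc)) = p\log N_n + O(1)$.

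Finally I would substitute the square-root bounds for $N_n$: from $a\sqrt n\le N_n\le b\sqrt n$ we get $\tfrac12\log n + \log a \le \log N_n \le \tfrac12\log n + \log b$, hence $\log N_n = \tfrac12\log n + O(1)$, and therefore $\entropyp((\hs\inter)^n(\gcirc)) = \tfrac{p}{2}\log n + O(1)$. Since $p\ge 1$ is a fixed positive constant, this gives constants $a',b'>0$ and $n_1\in\N$ with $a'\log n \le \entropyp((\hs\inter)^n(\gcirc)) \le b'\log n$ for all $n\ge n_1$ (one must take $n_1$ large enough that $\log n$ dominates the additive $O(1)$ term, e.g. large enough that the $O(1)$ term is at most $\tfrac{p}{4}\log n$ in absolute value), i.e. $(\entropyp((\hs\inter)^n(\gcirc)))_{n\in\N} = \Theta((\log(n)))$.

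I do not expect any genuine obstacle here: the only things to be careful about are (i) spelling out explicitly that $p$ is conserved by the dynamics and that $p\ge 1$ (indeed $p\ge 2$), so that the leading coefficient $p/2$ is a strictly positive constant and the $\Theta$ bound is non-degenerate, and (ii) folding the various additive constants and the $n\ge\max(n_0,n_1)$ thresholds into the definition of $\Theta$ cleanly. The mild subtlety is that $\Theta$ as defined in the paper is an eventual two-sided linear bound with no lower bound on $n$ below $n_0$, so one just needs the estimates to hold for all large $n$, which they do.
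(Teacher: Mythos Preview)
Your proof is correct and follows essentially the same route as the paper: use Theorem~\ref{th:sqrtSI_growing} to get $N_n=\Theta(\sqrt{n})$, note that $p$ is conserved, and sandwich $\comb{p}{N_n}$ between two quantities of order $N_n^{\,p}$ so that its logarithm is $\Theta(\log N_n)=\Theta(\log n)$. The only cosmetic difference is that the paper invokes the standard estimate $(a/b)^b\le\comb{b}{a}\le e^b(a/b)^b$ from the literature, whereas you derive the equivalent bounds $\frac{(N/2)^p}{p!}\le\comb{p}{N}\le N^p$ by hand; both yield the same $\Theta(\log n)$ conclusion.
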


\begin{proof}
Thanks to Th.~\ref{th:sqrtSI_growing}, we have that $(\entropyp(\hs \inter)^n(\gcirc))_{n\in \N} = \orderof(\log(\comb{p}{\orderof(\sqrt n)}))$.
Using the bounds $(\frac{a}{b})^b \leq \comb{b}{a} \leq  e^b(\frac{a}{b})^b$ \cite{dasBriefNoteEstimates2015},
and noting that $p$ is constant, we obtain :
\begin{align}  (\entropyp(\hs \inter)^n(\gcirc))_{n\in \N} &= \orderof(p \log(\frac{\orderof(\sqrt{n})}{p})) \\ &= \orderof(\log(n^{1/2})) \\ &= \orderof({\log(n)})\end{align}
\end{proof}

Note that this corollary would apply equally well to any dynamics where the number of particles remains constant and the size of the graph grows polynomially (not necessarily as a square root).

The asymptotic regime of the size function was reached as soon as all the merger patterns were destroyed, cf. Fig.~\ref{fig:evolution_typique_taille_si}. The same happens with global entropy, cf. Fig.~\ref{fig:global_entropy}.

If we are only interested in whether entropy grows, without seeking to characterise its asymptotic behaviour, we can state a more general theorem, relating it to size growth. Indeed, under the assumption that the particles do not fill the whole space, nor disappear completely, then any dynamics that increase the size of the graph will also increase the entropy:

\newcommand{\limite}[3]{\lim\limits_{#1} #2 = #3}
\begin{theorem}[Entropy increases with size]\label{th:entropywithsize}
For all $\gcirc\in \mathcal{C}_u$ and $f : \mathcal{C}_u \to\mathcal{C}_u$ such that :
 \medskip
\begin{itemize}
\item $\limite{n \to+\infty}{|V(f^n(\gcirc))|}{+\infty}$
\item $\exists m \in \N$ such that $\forall n \geq m, 1 \leq p_n \leq  2u \times|V(f^n(\gcirc))| -1$ where $p_n$ is the number of particles in the step in $f^n(\gcirc)$.
\end{itemize}
\medskip
We have that $\limite{n \to+\infty}{\entropyp(f^n(\gcirc))}{+\infty}$.
\end{theorem}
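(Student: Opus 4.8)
The plan is to reduce the claim to a single fully elementary inequality on binomial coefficients that holds uniformly in the number of particles. First I would set $N_n := |V(f^n(\gcirc))|$ and record that, by Definition~\ref{def:entropy}, $\entropyp(f^n(\gcirc)) = \log\binom{2uN_n}{p_n}$, since a configuration in $\mathcal{C}_u$ with $N_n$ vertices carries $2uN_n$ information bits among which the $p_n$ particles are distributed. The first hypothesis gives $N_n \to +\infty$; the second gives an $m$ with $1 \le p_n \le 2uN_n - 1$ for all $n \ge m$. So it is enough to prove $\log\binom{2uN_n}{p_n} \to +\infty$ as $n \to \infty$ along $n \ge m$.

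The only point requiring care is that $p_n$ is allowed to remain forever at one of the two endpoints of its admissible interval --- it need not grow, and it need not stay near $uN_n$ --- so the usual central-binomial lower bound $\binom{M}{\lfloor M/2\rfloor} \ge 2^M/(M+1)$ is of no use. What rescues the argument is combining the endpoint values with unimodality (log-concavity) of the binomial coefficients: for every integer $M \ge 2$ and every $p$ with $1 \le p \le M-1$ one has $\binom{M}{p} \ge \binom{M}{1} = M$, because $\binom{M}{\cdot}$ is nondecreasing on $\{0,\dots,\lfloor M/2\rfloor\}$ and, by the symmetry $\binom{M}{p} = \binom{M}{M-p}$, the case $p > M/2$ reduces to $1 \le M - p < M/2$. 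Applying this with $M = 2uN_n$ and $p = p_n$ (legitimate for $n \ge m$, where $p_n + 1 \le 2uN_n$ and $2uN_n \ge 2$) yields $\binom{2uN_n}{p_n} \ge 2uN_n$.

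Putting these together, for all $n \ge m$ we get $\entropyp(f^n(\gcirc)) = \log\binom{2uN_n}{p_n} \ge \log(2uN_n)$, and the right-hand side tends to $+\infty$ because $N_n \to +\infty$; hence $\entropyp(f^n(\gcirc)) \to +\infty$. I do not anticipate any real obstacle: the entire content is recognising that the lower bound must be uniform over the whole admissible range of particle numbers, which the ``endpoint plus unimodality'' inequality delivers in one line. If a growth rate were wanted rather than mere divergence, one could instead feed the admissible range of $p_n$ into the two-sided estimate $(M/p)^p \le \binom{M}{p} \le (eM/p)^p$ already used in Corollary~\ref{cor:growing_global_entropy}, but that refinement is unnecessary for the statement as given.
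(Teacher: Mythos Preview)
Your proof is correct and takes the same approach as the paper: lower-bound $\binom{M}{p}$ by $\binom{M}{1}=M$ for $1\le p\le M-1$, then use $|V(f^n(\gcirc))|\to\infty$ to conclude. You are simply more explicit than the paper about why the endpoint bound holds (unimodality plus symmetry) and about taking the top index to be $2uN_n$ in line with the hypothesis $p_n\le 2u|V|-1$; the paper states the inequality $\comb{p_n}{|V|}\ge\comb{1}{|V|}$ in one line without further justification.
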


\begin{proof}
Thanks to the second condition, we have for all $n\geq m$ :
\begin{align} \entropyp(f^n(\gcirc)) &= \log(\comb{p}{n}_{2|V(f^n(\gcirc))|})\\  
&\geq \log(\comb{1}{2|V(f^n(\gcirc))|})\\
&= \log(2|V(f^n(\gcirc))|)
\end{align}
As $\log(2|V(f^n(\gcirc))|)$ tends to $+\infty$ when $n$ tends to $+\infty$, this is also the case for $\entropyp(f^n(\gcirc))$.
\end{proof}

\subsection{Recovering an arrow of time}

With Th.~\ref{th:sqrtSI_growing} and Cor.~\ref{cor:growing_global_entropy}, we have proven that an entropic arrow of time emerges in some time-reversible, causal, homogeneous laws (namely the $\hs \inter$ toy model), without relying on the past hypothesis. More precisely, we have proven that starting from almost all configurations, entropy ultimately grows as we iterate the dynamics. Intuitively, after a finite period of dynamical clock time, the entropic clock's arrow aligns with that of the dynamical clock. This solves criticism $(iii)$ of the conventional argument. Notice how, ultimately, this resolution boils down to the fact that configurations are of finite but unbounded size. In this context, assuming that the universe ``starts small'' is reasonable, because for any configuration, there are many more larger configurations than smaller ones. The same happens with entropy: any starting value is small within the set of positive real numbers. In that sense past low entropy is no longer unreasonable, it is unavoidable.

An immediate consequence is that the toy model is not periodic, i.e. there is no recurrence time: this solves criticism $(i)$ in a way more satisfactory manner than arguing that ``there is a recurrence time but it is typically too big to be observed''. Let us look at criticism $(ii)$.

Since the system $\hs \inter$ is time-reversible, one can naturally ask what happens if one tries to "go back in time", i.e. how a graph evolves when one applies the dynamics $(\hs \inter)^{-1}=\inter^{-1}\hs^{-1}$.
Numerics suggest it also increases the size of the graph, but at a different rate, see Fig.~\ref{time_size_sh_rev}. We can prove it:

\begin{theorem}\label{sqrtSI_rev_growing}
For any $\gcirc\in \mathcal{C}_1$ containing at least one particle of each type, the sequence $(|V((\hs \inter)^{-n}(\gcirc))|)_{n\in \N} = \Theta(n)$.
\end{theorem}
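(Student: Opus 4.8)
The plan is to prove matching bounds on $s_n:=|V((\hs\inter)^{-n}(\gcirc))|$, namely $s_n=\Theta(n)$. The upper bound is immediate: a backward step is $\inter^{-1}\hs^{-1}=\inter\hs^{-1}$, and since $\hs^{-1}$ preserves the number of vertices while $\inter$ only splits $2$-particle vertices (adding one vertex each) and merges merger patterns (removing one each), the size changes by at most the number of $2$-particle vertices, which is at most $p$, the (conserved) number of particles. Hence $s_n\le s_0+np=O(n)$.

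For the lower bound the growth engine is the number $M_n$ (resp.\ $D_n$) of merger patterns (resp.\ $2$-particle vertices) in $(\hs\inter)^{-n}(\gcirc)$. Lemma~\ref{lem:fusion_instability} shows each merger pattern of a configuration has a merger pattern in its $(\hs\inter)^{-1}$-preimage, at a distinct location (the names $u.r$, $v.l$ determine $u$, $v$); hence $(M_n)$ is non-decreasing, and being bounded it stabilises, $M_n=M_\infty$ for $n\ge n_0$. I would first show $M_\infty\ge 1$. If instead $M_\infty=0$, then beyond $n_0$ no iterate has a merger pattern, so a backward step changes the size only by $-D_{n+1}\le 0$; the sizes are then eventually non-increasing, hence eventually constant, which forces $D_n=0$ from some step on as well. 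But a configuration with neither a $2$-particle vertex nor a merger pattern is a fixed point of $\inter$, and—by the same local inspection behind Lemma~\ref{lem:fusion_instability}—$\hs$ sends such a configuration to another one of the same kind; so the forward $\hs\inter$-orbit of that iterate would coincide with its $\hs$-orbit and keep constant size, contradicting Theorem~\ref{th:sqrtSI_growing}, which applies because the iterate still has a particle of each type (these counts are conserved). Hence $M_\infty\ge 1$.

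Next I would set up the growth estimate. Counting vertices through the forward step $\hs\inter:(\hs\inter)^{-(n+1)}(\gcirc)\mapsto(\hs\inter)^{-n}(\gcirc)$, in which $\inter$ merges $M_{n+1}$ merger patterns ($-1$ each) and splits $D_{n+1}$ $2$-particle vertices ($+1$ each), gives $s_{n+1}-s_n=M_{n+1}-D_{n+1}$, hence $s_N\ge s_{n_0}+M_\infty(N-n_0)-\sum_{n=n_0}^N D_n$. It remains to show the collisions are negligible, $\sum_{n\le N}D_n=o(N)$. This is the technical core and parallels Appendix~\ref{sec:expansionproof}: a $2$-particle vertex is exactly a crossing of an $a$-particle with a $b$-particle, and one argues, by the same kind of termination/accounting bookkeeping as for Theorem~\ref{th:sqrtSI_growing}, that because each of the $M_\infty$ persistent merger patterns keeps emitting a vertex per step the circle grows, so a fixed pair of counter-propagating particles can meet only $O(1+\sum_{n\le N}1/s_n)$ times; summing over the $O(p^2)$ pairs, $\sum_{n\le N}D_n=O\big(p^2(1+\sum_{n\le N}1/s_n)\big)$. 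Feeding this back in and bootstrapping closes the argument: if $s_n\ge (M_\infty/2)n$ up to $N$ then $\sum_{n\le N}1/s_n=O(\log N)$, whence $s_{N+1}\ge M_\infty(N+1-n_0)-O(\log N)\ge (M_\infty/2)(N+1)$ for $N$ large, giving $s_n=\Omega(n)$ and, with the upper bound, $s_n=\Theta(n)$.

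I expect the collision estimate $\sum_n D_n=o(N)$ to be the genuine obstacle: controlling how often two counter-propagating particles can meet while the graph is being inflated by the merger patterns is delicate (for instance, if the merger patterns cluster, their insertions can temporarily drive two particles apart), and making this rigorous is precisely the reasoning that makes the proof of Theorem~\ref{th:sqrtSI_growing} technical, so I would expect to reuse or mirror that appendix. Secondary care is needed for the base of the bootstrap (the transient regime $n\le n_0$ and the additive constants) and for making the step "having neither a merger pattern nor a $2$-particle vertex is preserved by $\hs$" fully explicit from the definitions.
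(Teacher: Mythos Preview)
Your overall skeleton is right and matches the paper's: an easy $O(n)$ upper bound, then a lower bound obtained by first showing that at least one merger pattern eventually persists, and then controlling the number of $2$-particle collisions so that the $M_\infty$ splittings dominate. Your argument for $M_\infty\ge 1$ via Theorem~\ref{th:sqrtSI_growing} is different from the paper's (which argues directly that, without merger patterns, every meeting shrinks the graph and meetings recur, forcing a merger pattern to appear), but it is correct: the claim that $\hs$ preserves ``no merger pattern and no $2$-particle vertex'' follows from the very definition of $\hs$ --- a $2$-particle vertex of $\hs(Y)$ at the edge $\{u{:}b,v{:}a\}$ requires $a_1(u)=b_1(v)=1$, which in a configuration with no $2$-particle vertex forces $(u,v)$ to be a merger pattern; and a merger pattern of $\hs(Y)$ would force the shared original vertex to carry both particles.

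The genuine gap is your collision estimate. You propose the heuristic bound that a fixed counter-propagating pair meets $O(1+\sum_n 1/s_n)$ times and then bootstrap; neither the heuristic nor the bootstrap base case is established, and you rightly flag this as ``the genuine obstacle''. The paper bypasses this entirely with one extra observation drawn from the same picture as Lemma~\ref{lem:fusion_instability}: under $(\hs\inter)^{-1}$ a merger pattern is not only stable, it is \emph{impenetrable} --- no other particle can cross it. Once a merger pattern exists, the circle is cut into arcs by these barriers; each free $a$-particle and free $b$-particle confined to the same arc can meet at most once, so the total number of remaining collisions is finite. Hence from some step $m$ on $D_n=0$ and $s_n=s_m+M_\infty(n-m)$ exactly, with $1\le M_\infty\le \min(n_a,n_b)$, giving both bounds at once. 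Adding this barrier observation replaces your whole bootstrap and closes the proof immediately.
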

\begin{proof}
In the absence of patterns \lineoftwo{black}{white}{white}{black}, the size of the graph decreases strictly each time two particles meet. By conservation of momentum, the particles will continue to cross each other. Since the graph cannot decrease continuously a pattern \lineoftwo{black}{white}{white}{black} will inevitably form. As can be seen in the proof of Lem.~\ref{lem:fusion_instability}, the pattern \lineoftwo{black}{white}{white}{black} is stable by $\inter^{-1}\hs^{-1}$, and cannot be crossed by other particles. This implies that once such a pattern is present, any pair of particles not belonging to such a pattern can only collide once. When all these collisions have occurred, each application of $\inter^{-1}\hs^{-1}$ increases the size of the graph by the number of patterns present. We can bound by $\min(n_a,n_b)$ the number of such patterns in $\gcirc$ and its successors by $\inter^{-1}\hs^{-1}$. Thus there exists $m \in \N$ such that for all $n \geq m$ we have :
\begin{align} &|V((\hs \inter)^{-m}(\gcirc))|+ n \\
&\leq |V((\hs \inter)^{-n}(\gcirc))|\\
&\leq |V((\hs \inter)^{-m}(\gcirc))| + n\min(n_a,n_b)\end{align}
\end{proof}

\begin{figure}
\centering\hfill
\begin{subfigure}{0.95\columnwidth}\centering
\includegraphics[width=0.95\columnwidth]{script_figure_these/sqrtSH-1/SH-1_classic_Taille_100_en.pdf}
\caption{\emph{Typical curve of size of the graphs under $(\hs \inter)^{-1}$ and $\hs \inter$.} The initial condition is drawn uniformly at random amongst graphs of size $100$. We run the dynamics backwards and forward to explore negative and positive dynamical clock times. }
\label{time_size_sh_rev}
\end{subfigure}\\
\begin{subfigure}{0.95\columnwidth}\centering
\includegraphics[width=0.95\columnwidth]{script_figure_these/sqrtSH_tsym/SH_tsym_Taille_100_en.pdf}
\caption{\emph{Typical curve of size of the graphs under $\hs\interb$}, a time-symmetrised version of $\hs\inter$. The initial condition is drawn uniformly at random amongst graphs of size $100$, thus particle density is high $\sim 0.5$.}
\label{sh_etendu}
\end{subfigure}\hfill
\caption{Evolution of the size of a typical graph under $(\hs \inter)^{-1}$, $\hs \inter$ and $\hs\interb$.}
\end{figure}

By the same proof scheme as for Corollary~\ref{cor:growing_global_entropy}, we obtain the entropy growth for $\inter^{-1}\hs^{-1}$:
\begin{corollaire}
For any $\gcirc\in \mathcal{C}_1$ containing at least one particle of each type, let $e_n=\entropyp(\hs \inter)^{-n}(\gcirc)$. We have $(e_n)=\Theta(\log(n))$.
\end{corollaire}

Thus, a variation of criticism $(ii)$ still holds, as entropy increases in both directions from a `source'. But here, since the model is not periodic, there is a single such source: a possibility discussed under the name of `Janus point', \corrB{whether positively  \cite{CarrollChen,CarrollChen2,BarbourNBody,BarbourJanus} or negatively \cite{ZehAgainstBarbour}---including on conceptual grounds}.
To move away from this region of minimal entropy, whether by iterating the forward dynamics or its reverse, means augmenting entropy and hence `going towards the future'. The important point is the way in which this variation relates to the resolution of criticism $(iii)$: it is no longer necessary to assume that the universe ``started'' in an improbable low-entropy state; as the existence of an arrow of time just follows from the existence of a minimal region, which itself is a direct consequence of the dynamical law.

\section{Time-symmetric model}\label{sec:time-symmetry}

Although the $\hs \inter$ toy model is time-reversible, it is not time-symmetric:
\begin{remarque}
 If $f$ is time-symmetric by $T$ then for all $n \in \N, Tf^nT= f^{-n}$. Consequently, since $\hs\inter$ and $\inter^{-1}\hs^{-1}$ do not have the same asymptotic behaviour, they are not time-symmetric.
\end{remarque}

However, there is a way to time-symmetrize the dynamics, by extending the state space to $2$ bits of information per port, i.e. by working on $ \mathcal{C}_2$. First of all notice that step $\hs$ is already defined for an arbitrary $ \mathcal{C}_n$, and time-symmetric under conjugation by the involution $T$. We only need to extend step $\inter$, because although $\inter$ is not time-symmetric under conjugation by that same conjugation $T$. Indeed, the orientation of the particles affects the behaviour of $\inter$: a pattern may merge or not depending on whether we apply $T$. %

We thus time-symmetrize step $\inter$ into step $\interb$, adding $2$ bits of information, for which step $\interb$ will trigger split/mergers, not on \lineoftwo{black}{white}{white}{black}, but on \lineoftwo{white}{black}{black}{white}, as in Fig.~\ref{intuitive_super_I}. In other words, step $\interb$ acts on the second bit, just like $T \inter T$ would act on the first bit.

\begin{figure}
\begin{center}
\begin{tabular}{l r}
\includegraphics[scale=2, clip=true, trim=0 0.0cm 0 0]{figures/Time_arrow/intuitive_super_I1.pdf} &
\includegraphics[scale=2, clip=true, trim=0 0.0cm 0 0]{figures/Time_arrow/intuitive_super_I2.pdf}
\end{tabular}
\end{center}
\caption{\emph{Dynamics $\interb$}. The presence of a particle is represented in black. Now there are two types of particles corresponding to port $a$, and two corresponding to port $b$. \corrA{This is what allows us to restore the time-symmetry that was broken by rule $\inter$.}}
\label{intuitive_super_I}
\end{figure}

\begin{proposition}
The dynamics $\hs\interb$ is time-symmetric.
\end{proposition}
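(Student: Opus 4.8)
The plan is to build an explicit involution $T$ witnessing time-symmetry, out of three ingredients that are already available. First, since $\hs$ is time-symmetric on every $\mathcal{C}_n$, there is an involution $T_0$ (the map exchanging, bit by bit, the data of the two ports of each vertex, i.e.\ reversing particle orientations) with $\hs^{-1}=T_0\,\hs\,T_0$. Second, $\interb$ is involutive, $\interb^2=\identity$ (it pairs each split with the inverse merge, exactly as $\inter$ does), so that $(\hs\interb)^{-1}=\interb^{-1}\hs^{-1}=\interb\,\hs^{-1}$. Third, on $\mathcal{C}_2$ there is the bit-swap involution $\sigma$ exchanging the first and second information bit of every port. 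I would begin by recording two commutations: $\sigma$ commutes with $\hs$, because $\hs$ copies every bit index identically (third clause of its definition), so $\sigma\hs\sigma=\hs$; and $\sigma$ commutes with $T_0$, since they act on disjoint coordinates (the bit index versus the port side). Hence $S:=T_0\,\sigma=\sigma\,T_0$ is an involution, and $S\,\hs\,S=T_0\sigma\hs\sigma T_0=T_0\hs T_0=\hs^{-1}$.

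The key step is the identity
\[
S\,\interb\,S=\interb ,
\]
which is precisely the symmetry for which $\interb$ was designed. Indeed $\interb$ performs an $\inter$-style split/merge on the first bit, firing on the merger pattern of Lemma~\ref{lem:fusion_instability}, together with a $T_0\inter T_0$-style split/merge on the second bit, firing on the mirror (orientation-reversed) pattern; conjugating by $\sigma$ interchanges ``first bit'' and ``second bit'', while conjugating by $T_0$ interchanges the merger pattern with its mirror, so the composite $S$ carries each of the two actions onto the other and leaves $\interb$ fixed. I would establish this by inspecting the rule of $\interb$ / Fig.~\ref{intuitive_super_I}; in particular one has to check that the first-bit and second-bit split/merge actions never act on overlapping vertices or adjacent pairs, so that they commute and $\interb$ really is invariant under simultaneously swapping the two bit-types and reversing orientations. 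This non-overlap check is the only genuinely rule-level work; everything else is formal.

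Granting $S\interb S=\interb$, set $T:=\hs\circ S=\hs\circ T_0\circ\sigma$; it is a causal graph dynamics, being a composition of the causal graph dynamics $\hs$, $T_0$, $\sigma$. It is an involution: $T^2=\hs S\hs S=\hs(\hs^{-1}S)S=\identity$, using $S\hs=\hs^{-1}S$ and $S^2=\identity$. Finally, using $S\hs=\hs^{-1}S$, $\hs S=S\hs^{-1}$ and $S\interb S=\interb$,
\[
T\,(\hs\interb)\,T=\hs\,S\,\hs\,\interb\,\hs\,S=\hs\,(\hs^{-1}S)\,\interb\,(S\hs^{-1})=(S\,\interb\,S)\,\hs^{-1}=\interb\,\hs^{-1}=(\hs\interb)^{-1}.
\]
Thus $T^2=\identity$ and $T(\hs\interb)T=(\hs\interb)^{-1}$, the required witness. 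The one delicate point is the rule-level verification that $\interb$ is $S$-invariant (equivalently, that its two per-bit split/merge actions are independent); the rest rests only on the previously established time-symmetry of $\hs$, the involutivity of $\interb$, and the paper's convention that $\hs\interb$ means ``$\interb$ then $\hs$''.
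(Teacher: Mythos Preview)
Your proof is correct and in substance matches the paper's argument: the involution you call $S=T_0\sigma$ is exactly the paper's map $R$ (swapping $a_1\leftrightarrow b_2$ and $a_2\leftrightarrow b_1$), and your two key facts $S\hs S=\hs^{-1}$ and $S\interb S=\interb$ are precisely the paper's $\flip\hs\flip=\hs^{-1}$ and $\interb\flip=\flip\interb$. The only genuine difference is the choice of witness: the paper takes $T:=\interb R$ and closes with a two-line computation using $\interb^2=\identity$, whereas you take $T:=\hs S$ and close using $S\hs=\hs^{-1}S$. Both choices work for the same reason---one needs to compose $S$ with one of the two involutive-up-to-$S$ factors to turn $\hs^{-1}\interb$ into $\interb\hs^{-1}$---and neither is simpler than the other. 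Your explicit factorisation $S=T_0\sigma$ is a nice addition the paper does not spell out; conversely, the paper's $T=\interb R$ avoids having to discuss $\hs$ as part of the symmetry witness. Your flagged ``non-overlap'' check is the same verification the paper dismisses with ``one can easily verify''; it is indeed the only place where one must look at the actual rule of $\interb$.
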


\newcommand{\flip}{R}
\begin{proof}
We define the function $\flip : \mathcal{C}_2 \to \mathcal{C}_2$ which for any vertex $u$ exchanges the values of $a_{1}(u)$ and $b_{2}(u)$, as well as the values of $a_{2}(u)$ and $b_{1}(u)$. One can easily verify that such a dynamic commutes with $\interb$, i.e. that $\interb\flip = \flip\interb$. Moreover, since $R$ exchanges particles on the ports $a$ and $b$, we have that $\flip\hs\flip=\hs^{-1}$. By letting $T:=\interb\flip$, we obtain the equalities:
\begin{align} T^2 = \interb\flip\interb\flip = \interb\interb\flip\flip = \identity\end{align}
And also :
\begin{align} T\hs\interb T &= \interb\flip\hs\interb\interb\flip \\
&=  \interb\flip\hs\flip \\
&= \interb\hs^{-1}\\
&=\interb^{-1}\hs^{-1} 
\end{align}
\end{proof}

The symmetry comes at a price: with potentially $2$ types of particles per node, interactions through step $\interb$ are sometimes prevented in a way that makes it difficult to extend Th.~\ref{th:sqrtSI_growing}. Still, we have strong numerical evidence that generic graphs expand, with a size curve of order $\orderof(n)$, see Fig.~\ref{sh_etendu}.

Just like for $\hs \inter$, the system has a minimal configuration from which two time arrows pointing in opposite directions are derived. In our experiments, this configuration is often the initial configuration. This does not mean that it is non-generic, quite the contrary: by drawing a configuration uniformly at random from the set of graphs of a certain size, we obtain with high probability a density that is close to $2$ particles per node. On the other hand, if we were to randomly choose a configuration $\gcirc$ from the set of configurations of an orbit, then for all $\epsilon \in \left[0,1\right]$, the density of $\gcirc$ would be less than $\epsilon$ with probability $1$ (if it is indeed true that the size of the graph grows in $\orderof(n)$ that is). By lowering the particle density of the initial state, we lower its chances of being the minimal configuration, see Fig.~\ref{fig:low_density}.

\begin{figure}
\includegraphics[width=0.95\columnwidth]{script_figure_these/sqrtSH_tsym/SH_tsym_Taille_1000_lowdensity_auto2_en.pdf}\\
\includegraphics[width=0.95\columnwidth]{script_figure_these/sqrtSH_tsym/SH_tsym_Taille_1000_lowdensity_auto3_en.pdf}
\caption{\emph{Typical curve of size of the graphs under $\hs\interb$, starting from low density initial conditions}. The initial condition is drawn at random amongst low density graphs of size $1000$, in such a way that each bit be set with probability $0.01$. Time $0$ is chosen a posteriori, so that  size $2000$ be reached. A chaotic phase is followed by linear growth.}
\label{fig:low_density}
\end{figure}

\section{Exponential growth models}\label{sec:exponential}
\newcommand{\interd}{I_e}
\newcommand{\ecp}{F}

The very early universe is believed to have known a phase of exponential growth, called inflation. Is this compatible with reversibility? Intuitively, we can achieve this in a variant of our toy model if the growth of the configuration is made proportional to the size of the configuration. This was not the case with $\hs\inter$ because it preserves the number of particles. This will be fixed if we manage to maintain the particle density above a certain limit. There are two simple ways of doing so. 

The first is to create new particles at each node division, for example by replacing step $\inter$ with the step $\interd$ shown in Fig. ~\ref{fig:intuitive_I_exp}. Under dynamics $\hs \interd$, the particle density tends towards $1/2$, which naturally leads to exponential growth according to numerics, see Fig.~\ref{fig:SH_exp_Taille_100}.

\begin{figure}\centering
\begin{subfigure}{0.95\columnwidth}\centering
\includegraphics[width=0.94\columnwidth]{figures/Time_arrow/intuitive_I_exp.pdf}
\caption{\emph{Step $\interd$}. \corrA{All occurrences of the two patterns are flipped for each other synchronously.} Thus, particles crossing on a vertex trigger a split into a pattern of $4$ vertices carrying $2$ new particles. \corrA{This way a certain level of particle density is maintained throughout expansion.}}
\label{fig:intuitive_I_exp}
\end{subfigure}\\
\begin{subfigure}{0.95\columnwidth}\centering
\includegraphics[width=0.65\columnwidth]{figures/Time_arrow/intuitive_B_flip.pdf}
\caption{\emph{Dynamics $\ecp$.} \corrA{Particle absence is replaced by particle presence and reciprocally, synchronously across the graph.}}
\label{fig:intuitive_B_flip}
\end{subfigure}
\caption{\label{fig:exponentialgrowth} {\em Two dynamics leading to variants featuring exponential growth.}}
\end{figure}
\begin{figure}\centering
\begin{subfigure}{0.95\columnwidth}
\centering
\includegraphics[width=0.95\columnwidth]{script_figure_these/sqrtSH_exp/SH_exp_Taille_100_en.pdf}
\caption{\emph{Typical curve of size of the graphs under $\hs\interd$.} The initial condition is drawn uniformly at random amongst graphs of size $100$ and density $\sim 0.1$.}\label{fig:SH_exp_Taille_100}
\end{subfigure}\\
\begin{subfigure}{0.95\columnwidth}
\centering
\includegraphics[width=0.95\columnwidth]{script_figure_these/sqrtSH_exp/SH_exp_flip_Taille_100_en.pdf}
\caption{\emph{Typical curve of size of the graphs under $\hs\inter\ecp$.} The initial condition is drawn uniformly at random amongst graphs of size $100$.}
\label{fig:SH_exp_flip_Taille_100}
\end{subfigure}
\caption{Observed exponential growth for the variants.\label{fig:observedexponentialgrowth}}
\end{figure}

The other possibility is to flip all the information bits of each vertex at each iteration, via step $\ecp$, as illustrated in Fig. ~\ref{fig:intuitive_B_flip}. Dynamics $\ecp\hs\inter$ does not actually keep the particle density above a bound, but it does oscillate between more or less than $1$ particles per vertex. This ensures that at least every second iteration, density is above $\frac{1}{2}$. Again this leads to exponential growth according to numerics, see Fig.~\ref{fig:SH_exp_flip_Taille_100}.

\corrB{The point of these two new variations of the toy model is not so much to reproduce specific features of contemporary cosmology. They are but proof of concepts. Still, they serve to show that 1/ there is no a priori incompatibility between exponential, inflationary growth and reversible expansion mechanism and 2/ reversible expansion mechanisms seem a robust recipe for time arrow without past hypothesis, independent of the specifics of each model.}

\section{Fighting thermal death}\label{sec:localentropy}

The previous sections provide a plausible toy model explanation for the entropic clock's arrow of time at global scales. But, is this explanation good enough to also explain the arrow of time that we observe locally, at our scales, e.g. going back to the glass of water with a drop of dye of Sec.~\ref{sec:conventional}? The fact is that local entropy quickly drops close to zero in the toy model, due to the dilution of matter under expansion. This is also the dominant process in cosmology, where it is known as `thermal death' (aka ``big freeze"). This section provides a mechanism for whereby clumps of matter form, slowing down thermal death. That way local entropy still can increase at places, and the arrow of time be witnessed locally. 

\begin{figure}[b]\centering
\centering
\includegraphics[width=0.95\columnwidth]{script_figure_these/sqrtSH_tsym/Entropy_space_time_axis.pdf}
\caption{\emph{Spacetime diagram and local entropy of dynamics $\hs\interb$.} Particles are shown in levels of grey, local entropy is shown in levels of red. \corrA{As space expands, particles are diluted and local entropy drops. This is thermal death.} \label{diagram-espace-entropie}}
\end{figure}

\subsection{Prognosis}

\paragraph*{Local entropy dies out}
In order to define a notion of local entropy entropy we apply the previous notion to disks of radius $r$ of the configuration. Here we average that:

\begin{definition}[Average local entropy]
Let $S$ be an entropy function on $\X$, and $r \in  \N^+$. The \emph{average local entropy function} of radius $r$ is defined by :
\begin{align} S_r(X) := \frac{1}{|V(X)|}\sum_{u\in V(X)} S(X_u^r)\end{align}
\end{definition}

The average local entropy turns out to be bounded by the density of the graph:

\begin{theorem}
For all graphs $X$ and for all $r \in \N$ :
$S_r(X)\leq r\frac{p(X)}{|V(X)|}$, where $p(X)$ is the number of particles of $X$.
\end{theorem}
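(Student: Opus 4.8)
The plan is to bound the local entropy at each vertex $u$ and then average. Recall that $S_r(X) = \frac{1}{|V(X)|}\sum_{u \in V(X)} S(X_u^r)$, where $X_u^r$ is the disk of radius $r$ around $u$, which is a path (or circular) subgraph on at most $2r+1$ vertices. By Definition~\ref{def:entropy}-style reasoning, the entropy $S(X_u^r)$ of such a disk equals $\log \binom{|V(X_u^r)|}{p(X_u^r)}$, where $p(X_u^r)$ is the number of particles contained in the disk. So the first step is simply to record that $S(X_u^r) = \log \binom{m_u}{q_u}$ with $m_u := |V(X_u^r)| \le 2r+1$ and $q_u := p(X_u^r)$.

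Next I would use the elementary bound $\binom{m}{q} \le 2^{m \cdot h(q/m)}$ — or, even more crudely, the bound $\log \binom{m}{q} \le q \log m$ combined with a linear-in-$q$ estimate — but the cleanest route to the stated inequality $S_r(X) \le r \frac{p(X)}{|V(X)|}$ (with $\log$ presumably base $2$, so that in the natural-log convention one reads it as $\le r \ln 2 \cdot p/|V|$, or the paper is implicitly choosing units) is: $\log\binom{m}{q} \le q \log\!\big(\tfrac{em}{q}\big)$ is too weak; instead use that $S(X_u^r) \le q_u \cdot c$ for a constant absorbing the factor — actually the sharp statement we want is $S(X_u^r) \le r \cdot q_u$ once one checks $\log\binom{m}{q} \le \tfrac{m}{2}$ and $m \le 2r+1$; but the genuinely robust bound is $\log\binom{m}{q} \le q\log 2 \cdot \tfrac{m}{q}\cdot(\dots)$. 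Let me state it properly: since each particle lies on a port of one of the $m_u \le 2r+1$ vertices and there are at most $2u$ bits per vertex, we have $\log\binom{m_u}{q_u} \le q_u \log(m_u) \le q_u \log(2r+1)$; to land exactly on $r \frac{p(X)}{|V(X)|}$ one instead observes the even simpler $\log_2\binom{m_u}{q_u} \le m_u h(q_u/m_u) \le$ — hmm. Given the phrasing, I expect the intended bound is really the pointwise estimate $S(X_u^r) \le r \cdot p(X_u^r)$, which follows because $\log \binom{m}{q} \le q$ whenever $m \le 2r+1$ is small relative to — no. Concretely: $\log\binom{m}{q}\le \log(2^{m-1}) = (m-1)\log 2 \le (m-1)$ (natural log, using $\log 2 < 1$), and $m - 1 \le 2r$; but we need it proportional to $q$. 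The correct pointwise claim must be $S(X_u^r) \le r q_u$, proved from $\binom{m}{q} \le m^q$ hence $\log\binom{m}{q} \le q\log m$, and then noting $q\log m \le q\log(2r+1)$; this gives $S_r(X) \le \log(2r+1)\cdot\frac{1}{|V(X)|}\sum_u q_u$.

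The final step is the counting identity $\sum_{u\in V(X)} p(X_u^r) = (2r+1)\,p(X)$ (each particle, sitting at some vertex $w$, is counted once in the disk $X_u^r$ for each of the $2r+1$ vertices $u$ within distance $r$ of $w$). Combining, $S_r(X) \le \frac{\log(2r+1)}{|V(X)|}(2r+1)p(X)$, and one checks $(2r+1)\log(2r+1) \le$ — this does not quite give the clean $r\,p(X)/|V(X)|$ unless one is more careful, which tells me the paper's bound uses the sharper binomial estimate $\log_2\binom{m}{q} \le q\log_2(em/q)$ is also not it. I suspect the cleanest honest argument uses $\log\binom{m}{q} \le \log\binom{2r+1}{q}$ together with the fact that the disk has at most $2r+1$ vertices but more importantly the particle budget: each vertex of the disk has $2u$ bits, total $2u(2r+1)$ bits, and $\binom{\text{(bits)}}{q}$... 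In any case, the \textbf{main obstacle} is nailing the exact constant: getting from the obvious $S(X_u^r) = O(q_u \log r)$ down to the claimed clean linear-in-$r$, linear-in-density bound. I would resolve this by using the tight bound $\log\binom{m}{q}\le m \cdot H(q/m)$ (binary entropy) and then $m H(q/m) \le$ something like $q\cdot\frac{m}{?}$ — more realistically, by the monotone bound $S(X_u^r)\le S(\text{full disk of radius }r) = \log\binom{2r+1}{q_u}$ and then a careful case analysis on whether $q_u$ is small or large relative to $r$, plus the averaging identity above; the small-$q_u$ regime gives the $r\,p/|V|$ term directly, and the large-$q_u$ regime is controlled because the sum $\sum_u q_u$ is only $(2r+1)p(X)$, forcing most disks to have few particles. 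Once the pointwise bound $S(X_u^r)\le r q_u$ (or the relevant averaged version) is in hand, the conclusion is immediate from the identity $\sum_u p(X_u^r) = (2r+1)p(X)$ and dividing by $|V(X)|$.
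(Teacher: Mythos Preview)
Your proposal is not a proof; it is a stream of attempted estimates that never commits to a single argument. Buried in it there \emph{is} a perfectly good bound: from $\log\binom{m}{q}\le q\log m$ with $m\le 2r+1$ you get $S(X_u^r)\le p(X_u^r)\log(2r+1)$, and the double-counting identity $\sum_u p(X_u^r)=(2r+1)\,p(X)$ then yields
\[
S_r(X)\ \le\ (2r+1)\log(2r+1)\,\frac{p(X)}{|V(X)|}.
\]
That is a rigorous inequality of the right shape, and it is all that the subsequent Corollary (thermal death) requires. You should have stopped there instead of chasing the exact constant~$r$; your suspicion that the stated constant is too sharp to reach by these elementary bounds is justified (try a single particle on a large circle with $r=1$).

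The paper's proof, by contrast, does not attempt any pointwise bound on $\log\binom{m}{q}$ at all. Its sole idea is the observation you never make: a disk containing \emph{no} particle has entropy $\log\binom{m}{0}=\log 1=0$. Hence the sum defining $S_r(X)$ collapses to the vertices $u$ whose disk meets at least one particle, and the paper then asserts the final bound in one line. So the decomposition is ``empty disk vs.\ nonempty disk'' rather than your ``bound each term by its particle count and double-count''. Your route is more explicit and actually justifies a constant; the paper's route isolates the single conceptual point (empty regions carry zero entropy) but leaves the last inequality unargued.
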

\begin{proof}
\begin{align*}
S_r(X) &= \frac{1}{|V(X)|}\sum_{u\in V(X)} S(X_u^r)\\
&= \frac{1}{|V(X)|}\left(\sum_{\substack{u\in V(X)\\ p(X_u^r)>0}} S(X_u^r) + \sum_{\substack{u\in V(X)\\ p(X_u^r)=0}} S(X_u^r)\right)\\
&= \frac{1}{|V(X)|}\sum_{\substack{u\in V(X)\\ p(X_u^r)>0}} S(X_u^r) \\
& \leq \frac{1}{|V(X)|} \times rp(X)
\end{align*}
\end{proof}

If the dynamics grows the size of the graph whilst preserving the number of particles, we cannot escape thermal death:
\begin{corollaire}[Thermal death]\label{cor:thermaldeath}
For any dynamics $F$ that has a bounded number of particles and increasing graph sizes, for all $r \in \N$ : $\limite{n \to +\infty}{S_r(F^n(X))}{0}$
\end{corollaire}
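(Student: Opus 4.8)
The plan is to obtain this as an immediate consequence of the preceding theorem via a squeeze argument, so there is essentially no new mathematical content to produce beyond bookkeeping. First I would fix $r\in\N$, abbreviate $X_n:=F^n(X)$, and invoke the preceding theorem to get the bound $S_r(X_n)\le r\,p(X_n)/|V(X_n)|$, valid for every $n$.

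Next I would feed in the two hypotheses on $F$. Since $F$ has a bounded number of particles there is a constant $P\in\N$ with $p(X_n)\le P$ for all $n$, whence $0\le S_r(X_n)\le rP/|V(X_n)|$; the lower bound holds because $S_r(X_n)$ is an average of the quantities $S((X_n)_u^r)=\ln\bigl(|[(X_n)_u^r]|\bigr)\ge\ln 1=0$. Since $F$ has increasing graph sizes — which, for an $\N$-valued sequence, forces $|V(X_n)|\to+\infty$ — the upper bound $rP/|V(X_n)|$ tends to $0$ as $n\to+\infty$. The squeeze theorem then gives $S_r(X_n)\to 0$, which is the claim, since $r$ was arbitrary.

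I do not expect a genuine obstacle here; the only point that needs a little care is the reading of the two hypotheses on $F$, namely that ``bounded number of particles'' should be understood as a uniform bound $P$ on $p(F^n(X))$ over all iterates, and ``increasing graph sizes'' as $|V(F^n(X))|\to+\infty$ (a bounded monotone sequence would not do, but strict monotonicity of an integer sequence does). Once those are pinned down, the computation above closes the proof, all the real work having already been absorbed into the preceding theorem, and — for the toy model $\hs\inter$ specifically — into Theorem~\ref{th:sqrtSI_growing} and Corollary~\ref{cor:growing_global_entropy}, which is precisely what guarantees these hypotheses are met there.
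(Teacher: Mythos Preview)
Your proposal is correct and matches the paper's approach: the corollary is stated immediately after the theorem bounding $S_r(X)\le r\,p(X)/|V(X)|$ and is meant to follow from it by exactly the squeeze argument you give. Your reading of the hypotheses (uniform bound on $p(F^n(X))$ and $|V(F^n(X))|\to+\infty$) is the intended one.
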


At first glance this result may seem to go against our everyday experience, whereby we witness an arrow of time locally. Yet it is also true that as the Universe expands, it becomes mostly empty, and its average local entropy drops: thermal death is one of the envisaged fates for the Universe.

\begin{figure}[b]%
\begin{center}
\includegraphics[width=0.95\columnwidth]{script_figure_these/sqrtSH/SH_classic_Entropie_moyenne_100_en.pdf}
\end{center}
\caption{\emph{Typical curve of average local entropy of the graphs under $\hs\inter$.} The initial condition is drawn uniformly at random amongst graphs of size $100$.}
\label{avg_entropy}
\end{figure}

Corollary~\ref{cor:thermaldeath} is verified numerically for dynamics $\hs\inter$, see Fig.~\ref{diagram-espace-entropie} and Fig.~\ref{avg_entropy} as well as for the time-symmetrized variant and their reverse dynamics. 
In the case of $(\hs\inter)^{-1}$, we can show that after a certain time, stable patterns are formed that keep on producing new vertices. The other particles cannot cross these patterns, they end up constantly moving in the same direction and thus not crossing each other (dynamics $\hs\interb$ seems to behave likewise).

In order to still witness growth in local entropy, in spite of expansion, average local entropy for fixed-size windows is therefore not the right mathematical quantity. We must look for a quantity which compensates for the loss of particle density. One way to proceed is to study average local entropy for variable-size windows, whose size follow expansion. This is done in Appendix~\ref{sec:variableentropy}. Next, we study the sum of the local entropies for fixed-size windows, instead.

\paragraph*{The sum of local entropies reaches a plateau}
\newcommand{\sommeent}[2]{\overline S_{#1}(#2)}
\newcommand{\ecartent}[2]{\sigma_{#1}(#2)}

\VL{Recall that the local entropy of a fixed-sized window is directly related to the density of particles there. Thus, generally speaking: 1/ Local entropy drops as the density drops and 2/ Once the density is everywhere very low, the local entropy can only increase by gaining particles from the neighbouring window, whose entropy then decreases. I.e. local entropy at this stage is mostly just moving around alongside with the particles. Instead, we wish to have a mathematical quantity that allows us to monitor local entropies in a way that 1/ Factors out the dilution of matter and 2/ Ignores mere displacements from one window to the next.} This all suggests studying the sum of local entropies:

\begin{align} \sommeent{r}{X} =  \sum_{u\in V(X)} S(X_u^r)\end{align}

Unfortunately, the sum of the entropies grows only very briefly for our previous dynamics, before it reaches the plateau of death, as seen in Fig.~\ref{fig:quick_local_entropy}.

\begin{figure}
\includegraphics[width=0.95\columnwidth]{script_figure_these/sqrtSH/SH_classic_Somme_des_entropies_100_en.pdf}\\
\includegraphics[width=0.95\columnwidth]{script_figure_these/sqrtSH_tsym/SHtsym_classic_Somme_des_entropies_100_en.pdf}
\caption{\emph{Typical curves of the sum of the local entropies of the graphs under $\hs\inter$ and $\hs\interb$.} The local entropy windows are of radius $5$. The initial conditions are drawn uniformly at random amongst graphs of size $100$.}
\label{fig:quick_local_entropy}
\end{figure}

\VL{\noindent {\em Details per models.} Dynamics $\hs\interb$ and $(\hs\inter)^{-1}$, behave similarly: the growth in local entropy around the time $0$ is the consequence of the growth of the graph. Then, division patterns fragment the space between the particles, and prevent any interaction. Thus, one essentially obtains patterns moving away at constant speed, between which inactive patterns move. This can be easily seen on a spacetime diagram on which the particles and the local entropy are superimposed (see Fig.~\ref{diagram-espace-entropie}). Dynamics $\hs\inter$ has a different behaviour. Observe that the sum of local entropy reaches its maximum when each local window contains at most one particle in each direction. Once all merger patters are destroyed, each time a particle collides with another, the distance between it and the other particles with the same orientation increases. Thus, for local entropy windows of radius $r$, the maximum sum of local entropy will be reached once each particle has completed $2r$ revolutions.
}
 
\subsection{Treatment}
\newcommand{\dampl}{D_l}
\newcommand{\dampr}{D_r}
\newcommand{\damp}{D}
\newcommand{\matter}{matter }
\newcommand{\heat}{radiation }
\newcommand{\agglomeration}{clump }
\begin{figure}%
\begin{center}
\includegraphics[width=0.95\columnwidth]{script_figure_these/sqrtSHD/no_growing_axis.pdf}
\end{center}
\caption{\emph{Spacetime diagram of $\hs\damp$.} Matter particles are shown in levels of grey, radiation particles are shown in red. For readability purposes, the initial configuration was taken without radiation. \corrA{Observe how grey signals coalesce upon collision amongst themselves, forming black conglomerates but emitting red signals, which in turn erode the conglomerates that they run into.}}
\label{diagram-espace-SH-matter}
\end{figure}
In order to preserve the possibility of local entropy growth over a longer period of dynamical clock time, we need extra ingredients. There are many avenues that could be explored and nature-inspired models \cite{RovelliBackToReichenbach,RovelliBackToReichenbach2} do not seem entirely out of reach, but they would require modelling efforts that lie beyond the scope of this paper, and whose complexity would likely obscure its main message. Still, some of the key ingredients seem to be matter clumping and radiation, and we can easily start to introduce these by implementing `inelastic shocks'.

\paragraph*{Inelastic shocks}
To do so we use $2$ bits of information per port, i.e. state space ${\cal C}_2$, and consider the composition of dynamics $\hs$ which we already defined, with $\dampl$ and $\dampr$ which we introduce. $\dampr$ is defined to be the permutation the $4$ different patterns shown in Fig.~\ref{intuitive_dampening} (it acts as the identity on all other patterns). $\dampl$ is the left-right-symmetric of $\dampr$, as obtained by 1/ Inverting the position of nodes $u$ and $v$ and their contents 2/ Inside each node, inverting the positions of the particles between ports $a$ and $b$.
\newcommand{\interc}{I_3}
\begin{figure}[b]
\begin{center}
\includegraphics[width=0.95\columnwidth]{figures/Time_arrow/intuitive_dampening.pdf}
\end{center}
\caption{\emph{Dynamics $D_r$.} `\matter particles' are shown in black. `\heat particles' are shown in red. The $x$, $y$, $z$ are boolean variables, so that the rule can apply independently of the presence or absence of these particles. Any left-incoming matter particle {\em (bottom-right)} gets stuck emitting radiation {\em (bottom-right)}. From there on the clump is stable: any right-outgoing particle {\em (top-left)} is brought back {\em (bottom-left)}. It will only be freed by left-incoming radiation particles {\em (top-right)}.}
\label{intuitive_dampening}
\end{figure}
Fig. ~\ref{diagram-espace-SH-matter} shows the evolution of a configuration under the dynamics $\hs\damp$. Intuitively, $\dampl$ and $\dampr$ allow particles encoded by the first information bit (`\matter particles') to clump together by emitting particles encoded by the second information bit (`\heat particles'). Conversely, when an \agglomeration of \matter gets hit by a \heat particle, the radiation particle is absorbed, and a \matter particle separate. This is analogous to an inelastic shock, where the collision of two massive objects lets the kinetic energy due to their relative velocity dissipate into heat. Similarly, radiation will erode an object. We write $\damp := \dampl\dampr$.

Next, we combine inelastic shocks with our previous ability to shrink/expand the graph, just by placing ourselves in ${\cal C}_3$  i.e. adding a $3$rd bit of information per port, on which the toy model dynamics $\hs\inter$ operates. In order not to interfere with other particles types, we just restrict $\inter$ to apply if and only if the neighbourhood does not contain \matter nor \heat particles, and call this $\interc$. Fig. ~\ref{fig:diagram_espace_entropie_SH} shows the spacetime diagram starting from a configuration containing no \heat particle, and density $\frac{1}{4}$ for the other particles. Fig.~\ref{fig:diagram-espace-chaleur-joli} instead starts from a configuration saturated with \heat particles: notice how it ends up with fewer \matter clumps.

\begin{figure}\centering
\begin{subfigure}{0.95\columnwidth}
\centering
\includegraphics[width=0.95\columnwidth]{script_figure_these/sqrtSHD/growing_axis.pdf}
\caption{~}
\label{fig:diagram_espace_entropie_SH}
\end{subfigure}\\
\begin{subfigure}{0.95\columnwidth}
\centering
\includegraphics[width=0.95\columnwidth]{script_figure_these/sqrtSHD/growing_typical_axis.pdf}
\caption{~}
\label{fig:diagram-espace-chaleur-joli}
\end{subfigure}\hfill
\caption{\emph{Spacetime diagram of dynamics $\hs\inter\damp$.} Matter particles are shown in levels of grey, radiation particles are shown in levels of red. The particles responsible for expansion are shown in green and blue. \corrA{The point of this more elaborate dynamics is that, in spite of expansion, matter does not dilute so quickly, as if forms conglomerates. Thermal death is slower.}}
\end{figure}

\paragraph*{Conservation law}
The following conservation law establishes a relation between the size of the \matter clumps, and the level of \heat.

\newcommand{\energie}{E}
\newcommand{\energieb}{E'}
\newcommand{\bitv}[3]{#1_{#2}(#3)}

\begin{theorem}\label{th:loiconservationchoc}
For all $\gcirc \in \setcycle{2}$, $\energie(\hs\damp(\gcirc)) = \energie(\gcirc)$ where $\energie$ is defined by :
 \begin{align}  \energie(\gcirc) &:= \!\!\!\sum_{u \in V(\gcirc)} \!\!\left(\bitv{a}{1}{u}\bitv{a}{1}{u.b}\bitv{b}{1}{u.b}+\bitv{b}{1}{u}\bitv{a}{1}{u.a}\bitv{b}{1}{u.a} \right)\\ 
 &-\sum_{u \in V(\gcirc)} \left(\bitv{a}{2}{u}+\bitv{b}{2}{u}
\right)\end{align}
i.e. the number of \matter particles that are such that the neighbour on the opposite port contains two \matter particles, minus the number of \heat particles, is constant.
\end{theorem}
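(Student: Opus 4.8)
The plan is to use the factorisation $\hs\damp=\hs\circ\dampl\circ\dampr$ and to prove that $\dampr$, $\dampl$ and $\hs$ each preserve $\energie$; the composition then preserves it too. Two preliminary remarks streamline the work. First, reindexing the two summands of the first sum of $\energie$ (substitute $v=u.b$ in the first and $w=u.a$ in the second) rewrites the \matter part of $\energie$ as
$$M(\gcirc):=\sum_{u\in V(\gcirc)}\bitv{a}{1}{u}\,\bitv{b}{1}{u}\,\bigl(\bitv{a}{1}{u.a}+\bitv{b}{1}{u.b}\bigr),$$
i.e. a sum, over the full \matter clumps $u$ of $\gcirc$, of the number of \matter particles occupying the two ports facing that clump; thus $\energie=M-H$ with $H(\gcirc)=\sum_{u}(\bitv{a}{2}{u}+\bitv{b}{2}{u})$ the \heat count. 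Second, $\energie$ is invariant under the left--right reflection of graphs (that reflection swaps the two original summands and fixes $H$); since $\dampl$ is by definition the reflection of $\dampr$, it suffices to establish $\energie\circ\hs=\energie$ and $\energie\circ\dampr=\energie$.

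For $\hs$, the \heat count is immediate: $\hs$ copies the second information bit verbatim onto the new vertex set (the rule $a_i(x)=a_i(u)$, $b_i(x)=b_i(v)$ holds for $i=2$ too), so $\sum_x\bitv{a}{2}{x}$ and $\sum_x\bitv{b}{2}{x}$ are mere reindexings of $\sum_u\bitv{a}{2}{u}$ and $\sum_u\bitv{b}{2}{u}$, hence $H\circ\hs=H$. For $M$, I would unfold the definition of $\hs$: each new vertex corresponds to an edge $\{u:b,v:a\}$ of $\gcirc$ with $v=u.b$, inherits the pair $(\bitv{a}{1}{u},\bitv{b}{1}{u.b})$, and the ports of the dual circle get reattached so that the two neighbours of that new vertex come from $u$ and $u.b$; plugging this into the rewritten $M$ and reindexing shows that the two halves $\bitv{a}{1}{u.a}$ and $\bitv{b}{1}{u.b}$ of each summand are merely interchanged between the two original summands, so $M\circ\hs=M$. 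Equivalently and more conceptually: $M$ is a translation-invariant count of local patterns on the refined circle whose sites are the vertices and the edges of $\gcirc$, and $\hs$ acts there as a shift by one site (so $\shift=\hs^2$ is a shift by one edge, which visibly preserves $\energie$); such shifts preserve translation-invariant counts.

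For $\dampr$: by definition $\dampr$ permutes the four patterns of Fig.~\ref{intuitive_dampening} and is the identity on all other patterns, and these patterns do not overlap, so $\dampr(\gcirc)$ differs from $\gcirc$ only inside pairwise-disjoint active windows. Since an $M$-summand anchored at a clump $u$ involves only $u$, $u.a$ and $u.b$, enlarging each active window by one vertex on each side captures every $M$- and $H$-summand that could change; it then suffices to check, on these finitely many enlarged windows, that the total $\energie$-contribution agrees before and after. Conceptually this holds because the clump-creating rule replaces two \matter particles about to collide by a \matter clump plus one emitted \heat particle — raising $M$ and $H$ by one apiece — its inverse freeing rule lowers both by one, and the two reflecting rules relocate a \matter particle in a way that leaves $M$ (and $H$) unchanged; summing over active windows and adding the untouched remainder gives $\energie\circ\dampr=\energie$. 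This last part is, in the style of the paper, ``by inspection of Fig.~\ref{intuitive_dampening}''.

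\textbf{The hard part} is the bookkeeping in the $\dampr$ step: one must track the $M$-summands anchored on clumps lying just outside an active window, and above all verify that the reflecting rules --- which genuinely move a \matter particle from one side of a clump --- do not change $M$, and that the clump-creating/freeing rules shift $M$ by exactly one (and not by two, as would happen if the clump formed happened to face \matter particles on both sides). Getting these finitely many cases right is precisely what the deliberately lopsided definition of $\energie$ is engineered for; the only other delicate point is unrolling the dual-circle reattachment in the $\hs$ step, but that is routine once the definition of $\hs$ is written out.
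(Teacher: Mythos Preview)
Your decomposition strategy fails: neither $\hs$ nor $\damp$ preserves $\energie$ individually. A three-vertex counterexample for $\hs$: take $u_1,u_2,u_3$ in a cycle with $u_i.b=u_{i+1}$, set $\bitv{a}{1}{u_1}=\bitv{b}{1}{u_1}=1$, $\bitv{b}{1}{u_2}=1$, and all other bits (matter and heat) to $0$. Then your rewritten $M$ gives $M(\gcirc)=\bitv{a}{1}{u_3}+\bitv{b}{1}{u_2}=1$, but after $\hs$ the only full vertex is $x_1=u_1.r\vee u_2.l$ with $\bitv{a}{1}{x_1.a}=\bitv{a}{1}{u_3}=0$ and $\bitv{b}{1}{x_1.b}=\bitv{b}{1}{u_3}=0$, so $M(\hs(\gcirc))=0$. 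Your ``conceptual'' shift argument breaks because the clump-detector $\bitv{a}{1}{u}\bitv{b}{1}{u}$ is anchored at a \emph{vertex}; on the refined circle this is a parity-dependent pattern, invariant only under even shifts ($\shift$), not under the odd half-shift $\hs$. Likewise $\damp$ alone does not preserve $\energie$: the paper's own motif calculation gives four local $\energie$-values $1{-}s,\,-s,\,-s,\,-(s{+}1)$ which get genuinely permuted by the $4$-cycle.

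What the paper actually does is introduce a second functional $\energieb$ (same formula as $\energie$ but with $u.a$ and $u.b$ swapped in the matter terms) and prove the interleaved identities $\energie(\gcirc)=\energieb(\damp(\gcirc))$ and $\energieb(Y)=\energie(\hs(Y))$, so that the composition restores $\energie$. In your language, $\energieb$ has matter part $M'(\gcirc)=\sum_u \bitv{a}{1}{u}\bitv{b}{1}{u}\bigl(\bitv{a}{1}{u.b}+\bitv{b}{1}{u.a}\bigr)$, counting matter particles on the \emph{near} side of the clump's neighbours, whereas your $M$ counts those on the far side; $\hs$ swaps near and far, which is why it exchanges $M$ and $M'$ rather than fixing either. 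The missing idea is precisely this auxiliary invariant.
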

\begin{proof}(Outline). The proof technique is step-by-step and case-by-case keeping track of $\energie$, thus akin to a program-invariant proof. We shift it to Appendix~\ref{sec:loiconservationchocproof} for readability.  
\end{proof}
\VL{{\em Details on the model.} From this conservation law we deduce that the number of particles can only ever vary within a factor of two with respect to that present initially. This provides a bound on the sum of local entropies.
\begin{corollaire}\label{cor:borned_sum_entropy}
We have that for all $\gcirc \in \setcycle{3}$, and for all $n \in \N$, $ p\left(\gcirc\right)/2\leq p\left((\hs\interc\damp)^n (\gcirc)\right)\leq 2\times p\left(\gcirc\right)$, where $p(X)$ is the number of particles of $X$.
Consequently, $\sommeent{r}{\hs\interc\damp)^n (\gcirc)} \leq \sum_{i=0}^{2 \times p(X)} \log(\comb{i}{2r+1}) = 2 \times p(X) \times \log(2r+1)$. %
\end{corollaire}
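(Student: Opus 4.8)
The plan is to deduce the statement in two moves: first, bound the total number of particles of $(\hs\interc\damp)^n(\gcirc)$ between $p(\gcirc)/2$ and $2\,p(\gcirc)$ by promoting the conservation law of Theorem~\ref{th:loiconservationchoc}; second, turn this into the announced bound on the sum of local entropies by an elementary counting over the radius-$r$ windows. For the first move, classify the particles of a configuration of $\setcycle{3}$ by which of the three information bits of a port they occupy: \emph{matter} on bit~$1$, \emph{radiation} on bit~$2$, \emph{expansion} on bit~$3$. Step $\hs$ merely transports particles, so it preserves each family's size. Step $\interc$ is $\inter$ acting on bit~$3$ only, and $\inter$ conserves the particle count (a split trades one doubly-occupied vertex for two singly-occupied ones, and conversely for a merge); its activation pattern moreover requires the absence of matter and radiation nearby, so it leaves bits~$1$ and~$2$ alone. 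Step $\damp$ acts on bits~$1$--$2$ only, and --- as one reads off the definitions of $\dampl$ and $\dampr$ --- an inelastic shock only emits or reabsorbs a radiation particle, never creating or destroying a matter particle. Hence, writing $M$, $H$, $N_3$ for the three family sizes, $M$ and $N_3$ are invariants of $\hs\interc\damp$, while only $H$ may change.

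Next I would promote Theorem~\ref{th:loiconservationchoc} from $\hs\damp$ on $\setcycle{2}$ to $\hs\interc\damp$ on $\setcycle{3}$. The point is that $\energie$ is a sum, over vertices, of a local expression in bits~$1$ and~$2$ only, while $\interc$ merely inserts or removes vertices all of whose bit-$1$ and bit-$2$ ports are empty; such vertices, and the vertices they split from or merge into, contribute $0$ to $\energie$ before and after, so the matter-and-radiation sub-configuration evolves under $\hs\interc\damp$ exactly as it would under $\hs\damp$, up to insertion or removal of particle-free vertices --- and $\energie$ is blind to those. So $\energie$ is an invariant of $\hs\interc\damp$. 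Writing $\energie = P - H$, where $P$ is the first of the two sums defining $\energie$ in Theorem~\ref{th:loiconservationchoc}, $P$ counts certain matter particles, hence $0\le P\le M$; together with $H\ge 0$ and the invariance of $M$, the identity $P_n - H_n = P_0 - H_0$ forces $H_0 - M \le H_n \le M + H_0$, so $p_n := M + H_n + N_3$ obeys $\max(M,H_0) + N_3 \le p_n \le 2M + H_0 + N_3$; using $p(\gcirc) = M + H_0 + N_3$ and $\max(M,H_0) \ge \tfrac12(M+H_0)$ this is exactly $p(\gcirc)/2 \le p\big((\hs\interc\damp)^n(\gcirc)\big) \le 2\,p(\gcirc)$.

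For the second move, set $Y = (\hs\interc\damp)^n(\gcirc)$. Every radius-$r$ window $Y_u^r$ spans at most $2r+1$ vertices, so by the microstate count behind Definition~\ref{def:entropy} we have $S(Y_u^r) \le \log\comb{p(Y_u^r)}{2r+1}$, which vanishes whenever $Y_u^r$ is particle-free. A direct counting over the particle-containing windows --- using that $Y$ carries at most $2\,p(\gcirc)$ particles by the previous step, and that $k \mapsto \log\comb{k}{2r+1}$ increases and then decreases --- then bounds $\sommeent{r}{Y} = \sum_{u} S(Y_u^r)$ by $\sum_{i=0}^{2\,p(\gcirc)} \log\comb{i}{2r+1} = 2\,p(\gcirc)\log(2r+1)$, as claimed.

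The crux is the first two paragraphs, and it is a bookkeeping matter more than a conceptual one: extracting from the concrete rules $\dampl,\dampr$ the precise facts that the matter count is preserved and that $0\le P\le M$, and then verifying --- a finite but finicky case analysis, since $\interc$ relabels and restructures vertices and $\hs$ is then applied on top --- that interleaving $\interc$ into the composition genuinely leaves $\energie$ intact. Once those are secured, the arithmetic squeeze on $p_n$ and the counting estimate for $\sommeent{r}{\cdot}$ are routine.
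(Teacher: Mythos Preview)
Your approach is exactly what the paper has in mind: the corollary is stated there without proof, merely as something one ``deduces from this conservation law'', and you supply precisely that deduction --- splitting the particle count into matter, radiation, and expansion families, noting that $M$ and $N_3$ are separately conserved, promoting the invariant $\energie=P-H$ from $\hs\damp$ on $\setcycle{2}$ to $\hs\interc\damp$ on $\setcycle{3}$ (a point the paper glosses over but which you rightly flag as needing a small case check), and then squeezing $H_n$ and hence $p_n$ between $p(\gcirc)/2$ and $2p(\gcirc)$. The arithmetic of that squeeze is correct. For the entropy consequence you are about as precise as the paper itself, which simply writes down the displayed bound; neither you nor the paper carefully justifies the final equality $\sum_{i=0}^{2p}\log\comb{i}{2r+1}=2p\log(2r+1)$, and indeed as written it is only heuristic --- but that is a defect of the statement, not of your argument.
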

From this lower and upper bound on the number of particles it also follows from Th.~\ref{th:entropywithsize} and Cor.~\ref{cor:thermaldeath} that for any graph $\gcirc$ such that $\left(|V((\hs\interc\damp)^n (\gcirc))|\right)_{n \in \N}$ is strictly increasing, the global entropy tends to $+\infty$ and the average local entropy tends to $0$.}

\subsection{Trials}

\begin{figure}[b]
\centering
\begin{subfigure}{0.95\columnwidth}\centering
\includegraphics[width=0.95\columnwidth]{script_figure_these/sqrtSHD/Damping_6_Taille_100_en.pdf}
\caption{\emph{Size of the graphs.}}
\label{fig:evolution-typique-taille-sid}
\end{subfigure}\\
\begin{subfigure}{0.95\columnwidth}\centering
\includegraphics[width=0.95\columnwidth]{script_figure_these/sqrtSHD/Damping_6_Entropie_globale_100_en.pdf}
\caption{\emph{Global entropy of the graphs.} }
\label{fig:global_entropy_sid}
\end{subfigure}
\caption{\emph{Typical size and global entropy curves for dynamics $\hs\inter\damp$.} The initial condition is drawn uniformly at random amongst graphs of size $100$.}
\end{figure}

\begin{figure}
\begin{subfigure}{0.95\columnwidth}
\centering\hfill
\includegraphics[width=0.95\columnwidth]{script_figure_these/sqrtSHD/Damping_6_Entropie_moyenne_100_en.pdf}
\caption{\emph{Average local entropy of the graphs.}}
\label{fig:avg_entropy_sid}
\end{subfigure}\\
\begin{subfigure}{0.95\columnwidth}\centering
\includegraphics[width=0.95\columnwidth]{script_figure_these/sqrtSHD/Damping_6_Somme_des_entropies_100_en.pdf}
\caption{\emph{Sum of the local entropies of the graphs.} }
\label{fig:space_time_entropy_1}
\end{subfigure}
\caption{\emph{Typical local entropy curves for dynamics $\hs\inter\damp$.} The initial condition is drawn uniformly at random amongst graphs of size $100$.}
\end{figure}

Again the complexity of $\hs\damp$ comes at a price, with potentially $3$ types of paticles per nodes interactions through step $\interc$ are sometimes prevented in a way that makes it difficult to extend Th.~\ref{th:sqrtSI_growing}. Still, we again have strong numerical evidence that generic graphs expand, with a size curve of order $\orderof(\sqrt{n})$ matching that of the underlying dynamics of $\hs\inter$ that is driving the growth, cf. Fig.~\ref{fig:evolution-typique-taille-sid}. Typically, other particles seem to only slightly slow down this growth, and may only totally prevent it in extreme cases. Again, as a result of this growth, global entropy grows (Fig.~\ref{fig:global_entropy_sid}) and average local entropy dies out (Fig.~\ref{fig:avg_entropy_sid}). Again the sum of the local entropies reaches a plateau (Fig. ~\ref{fig:space_time_entropy_1}). Now, the interesting point is that it reaches is much more slowly. 

The origin of this slower growth can be understood thanks to the conservation law of Th.~\ref{th:loiconservationchoc}. The inelastic shocks create \matter clumps of not-so-low density, where enough degrees of freedom survive so as to preserve the possibility for local entropy growth within these matter-populated regions. 
This however, by the conservation law, is done at the cost of the emitting \heat particles (which increases the entropy of other local windows). The formation of a clump of matter is therefore temporary, lasting only as long it is not collided by a emitted \heat particles (cf. Fig.~\ref{fig:diagram-espace-chaleur-joli}).
But as the graph size grows, \heat density lowers, and so the \heat travel times are longer. It follows that \matter clumps survive for longer and longer, thereby preserving the possibility for local entropy growth for longer and longer.

This mechanism gets amplified when replacing the underlying growth dynamics $\hs\inter$, with $\hs^{-1}\inter^{-1}$, or with exponentially growing dynamics like $\hs\interd$ and $\ecp\hs\inter$. For such dynamics, emitted \heat particles typically never make it back to the \matter clumps, which act as stable `entropy reservoirs'. 

\section{Conclusion}\label{sec:conclusion}

\paragraph{Underlying physical assumptions.} 
\corrA{This paper is motivated by the understanding that Physics Laws are fully time-reversible, and that this contradicts our everyday experience of time irreversibility---thereby calling for an explanation. Whilst time-symmetry is manifest in classical mechanics, general relativity and `pure' quantum mechanics, it clearly fails as soon as quantum measurements are introduced, as these are non-unitary. Whilst this paper is not cast in a quantum theoretical language, it is quantum-theory-ready, as the reversible evolutions considered can all be made into unitary over the Hilbert space of superposition of configurations by sheer linearity. Doing so describes a toy `single quantum wave universe'. Quantum measurements could then be recovered in a more epistemic fashion just by tracing out certain subsystems, in the spirit of decoherence theory \cite{PazZurek}.
}

\corrB{This paper is also motivated by the understanding that the Past hypothesis is too strong an assumption, because ``low entropy states are unlikely amongst the set of all states''. The sense in which they are unlikely, however, could be discussed. For instance, if the Past hypothesis was to be elevated to the status of a Physics Law, then they would not be unlikely. Notice, however, that Physics Laws are generally about prescribing a dynamics over a state space in a way that is homogeneous in space and time---not about positing the likelihood of some state over the other at a particular time. The Past hypothesis law would be quite a peculiar one in that sense. Alternatively, the Past hypothesis can be understood along the lines of Bayesian credence \cite{BayesianMyrvold}, i.e. the idea that based upon our observations about the universe, it likely started in a low entropy state. This is not incompatible with the conclusions of this paper, which seek to explain the reasons for these observations. 
}

\paragraph{Analytical results.} This paper provides a first rigorous proof that an arrow of time \corrA{can} emerge from almost all configuration under some time-reversible dynamics, without the need for a past hypothesis. \corrA{It does so by means of a toy model,} as the reversible dynamics in question is cast in the setting of reversible causal graph dynamics ; the methods used pertain to the field of theoretical Computer Science. The proof works by showing that graph size increases, and that entropy increases with size, where entropy is defined as for perfect gases. It provides a local explanation for the origin of the arrow of time, by tracing it back to local, reversible expansion mechanisms acting over configurations of finite but unbounded size.\\
This explanation resolves two of the three main drawbacks to the standard of the standard Boltzmann argument: there is no recurrence time, and no need to assume atypical initial conditions. One of the criticism still holds as there are successive configurations of minimal entropy from which two arrows of time flow in opposite directions. This idea is present in some cosmological models and popularized under the name or \lq Janus point\rq.

\paragraph{Numerical results.}
We also provide evidence, though computer simulations, that the explanation is compatible with fully time-symmetric dynamics, as well as periods of exponential growth.\\ 
In Thermodynamics, entropy increases globally, but locally it may well decrease and stabilize close to zero, reaching the so-called `thermal death'. At which stage it becomes impossible to witness an arrow of time locally. 
This is expected, and does occur in the toy model. We in fact provide a theorem about the fatality of thermal death in such models. Indeed as the graph expands, global entropy increases without bound, because ``there are more ways in which to position the matter'', but as the same time matter dilutes. In any fixed-sized window, matter becomes scarce, there are less ways to position it, and so local entropy decreases. The sum of the local entropies still increases as it should, but soon it stabilizes around an upper bound, characteristic of thermal death. All too quickly, there is no longer enough matter for the local entropy to increase; there is nowhere in the graph where we are able to observe an arrow of time locally. This is somewhat unrealistic. Although our universe, $\sim 13.7$ billion years after the big bang, is indeed largely empty, it still has regions where enough matter is concentrated so that local entropy may continue to increase, allowing us to observe local entropic clocks. According to Reichenbach and Rovelli \cite{RovelliBackToReichenbach, RovelliBackToReichenbach2} this is mainly due to the clumping of matter and the emission of radiation, through a delicate interplay between nucleosynthesis and gravity: modelling such processes lied way beyond the scope of this paper and therefore belongs to the perspectives. Yet, we do provide numerical evidence that just by adding inelastic collisions of matter to the model (and triggering the emission of radiation into free space so as to keep things time-reversible) the sum of the local entropies increases way slower. Locally, we are able to observe an arrow of time for longer. Thermal death is delayed.\\
\corrA{
\paragraph{Relevance of the toy models, to actual Physics.}
The use of toy models is ubiquitous in Physics. Indeed, for any explanation of any phenomenon we are better off within a simplified model, than using the full blown standard model over a general relativistic background.\\
Yet, some toy models are more realistic than others. Here is what can be said about the ones developed in this paper. 
\begin{itemize}
    \item The particles considered have spin, but no mass. It would be easy to add mass and thereby ensure that these particles propagate like fermions in the appropriate continuum limit \cite{arrighi2013dirac}.
    \item The geometry of the toy models is somewhat analogous to a Roberston-Walker metric, with quite noticeable differences: $1+1$ dimensions instead of $3+1$; discretized; and with a space-inhomogeneous scale factor. 
    \item Particle crossings trigger the spatial expansion (or sometimes shrinking), analogous to variations of the $g_{xx}$-component of the metric. Whilst this is likely the simplest reversible expansion mechanism one can imagine, it clearly is not in line with the Einstein field equations.
\end{itemize}
Although the four variants seek to reproduce other physically relevant features, they certainly do not constitute realistic models of expansion.\\
Yet, all five models, and in spite of their variations, constitute proofs of concepts that reversible expansion mechanisms lead to time arrow without past hypothesis. This is relevant to Physics per se, because this robustness suggests that most reversible expansion mechanism; including those stemming from general relativity, could have the same effects.\\
Thus, this toy model explanation may constitute a simple yet rigorous illustration of the core logic that drives the time arrow at the cosmological level. 
}

\paragraph{Perspectives.} We wonder whether the similar arguments can be made based on other notions of entropy, such as: metric entropy, topological entropy, or Von Neumann entropy (in the quantum regime), as these seem less dependent on choices of macroscopic variables. More crucially perhaps, we wonder whether such models can be brought to 2D and 3D.

\subsection*{Acknowledgements}

We wish to thank Marios Christodoulou, Pierre Guillon, Benjamin Hellouin, Carlo Rovelli and Francesca Vidotto for many discussions and insights. We also wish to thank the anonymous referees for their very helpful comments. This project/publication was made possible through the support of the ID\# 62312 grant from the John Templeton Foundation, as part of the \href{https://www.templeton.org/grant/the-quantum-information-structure-of-spacetime-qiss-second-phase}{‘The Quantum Information Structure of Spacetime’ Project (QISS)}. The opinions expressed in this project/publication are those of the author(s) and do not necessarily reflect the views of the John Templeton Foundation.

\bibliography{biblio,zotero,perso}

\appendix

\section{Named graphs}\label{sec:namedgraphs}

Say as in Fig.~\ref{fig:intuitive_I} that some quantum evolution splits a vertex $u$ into two. We need to name the two infants in a way that avoids name conflicts with the vertices of the rest of the graph. But if the evolution is locally-causal, we are unable to just `pick a fresh name out of the blue', because we do not know which names are available. Thus, we have to construct new names locally. A natural choice is to use the names $u.l$ and $u.r$ (for left and right respectively). Similarly, say that some other evolution merges two vertices $u,v$ into one. A natural choice is to call the resultant vertex $u\lor v$, where the symbol $\lor$ is intended to represent a merger of names.

This is, in fact, what the inverse evolution will do to vertices $u.l$ and $u.r$ that were just split: merge them back into a single vertex $u.l\lor u.r$. But, then, in order to get back where we came from, we need that the equality $u.l\lor u.r=u$ holds. Moreover, if the evolution is time-reversible, then this inverse evolution does exists, therefore we are compelled to accept that vertex names obey this algebraic rule.

Reciprocally, say that some evolution merges two vertices $u,v$ into one and calls them $u\lor v$. Now say that some other evolution splits them back, calling them $(u\lor v).l$ and $(u\lor v).r$. This is, in fact, what the inverse evolution will do to the vertex $u\lor v$, split it back into $(u\lor v).l$ and $(u\lor v).r$. But then, in order to get back where we came from, we need the equalities $(u\lor v).l=u$ and $(u\lor v).r=v$.

\begin{definition}[Names]\label{def:namealgebra}
Let $ \mathbb{K}$ be a countable set.
The name algebra $ \mathcal{N}[\mathbb{K}]$ has terms given by the grammar 
\begin{equation*}
u,v\ ::=\ c\ |\ u.t\ |\ u\lor v\quad\text{with} \quad c\in \mathbb{K} ,\ t\in \{l,r\}^{*}
\end{equation*}
and is endowed with the following equality theory over terms (with $ \varepsilon $ the empty word):
\begin{equation*}
( u\lor v) .l=u\qquad ( u\lor v) .r=v\qquad u.\varepsilon =u\qquad u.l\lor u.r=u
\end{equation*}
We define $ \mathcal{V}:=\mathcal{N}[\mathbb{K}]$. 
\end{definition}
The fact that this algebra is well-defined was proven in \cite{ArrighiCreation}.  Now that we have a set of possible names for our vertices, we can readily define `port graphs' (aka `generalized Cayley graphs' \cite{ArrighiCayleyNesme}). Condition \eqref{eq:wellnamedness} will just ensure that names do not intersect, e.g. forbidding that there be a name $u\vee v$ and another $v\vee w$, so as to avoid name collisions should they split.
\begin{definition}[Named graphs]\label{def:graphs}
Let $ \Sigma $ be the set of internal states and $\pi$ be the set of ports. A graph $G$ is given by a finite set of vertices $V_G\subseteq \mathcal{V}$ such that for all $v,v'\in V_G$ and for all $t,t'\in \{l,r\}^{*}$,
\begin{align}
v.t=v'.t'\textrm{ implies } v=v' \textrm{ and } t=t'\label{eq:wellnamedness}
\end{align}
together with 
\begin{itemize}
\item $\sigma_G : V_G\to\Sigma $ its internal states
\item $E_G$ a set of non-intersecting two element subsets of $V_G:\pi$, its edges.
\end{itemize}
In other words an edge $e$ is of the form $\{x : a, y : b\}$ and $\forall e,e'\in E_G., e\cap e'\neq \emptyset \Rightarrow e=e'$.
\end{definition}

\section{Proof of Th.~\ref{th:sqrtSI_growing}}\label{sec:expansionproof}

\newcommand{\ind}[1]{\indice(#1)}
\newcommand{\sumofv}[1]{\sum_{i=0}^{#1}v_i}
\begin{proof}
First, it is argued that there is a time step $m$ after which no more vertex mergers will occur. We denote $n_f(\gcirc)$ the number of merger patterns in $\gcirc$, and $d_{f}(\gcirc)$ the minimum distance between a merger pattern and a particle moving towards it (this includes a particle present in a merger pattern, and itself if there are no other particles).  We will show that the pair $\left(n_{f}({\hs } \inter)^n (\gcirc), d_{f}({\hs } \inter)^n (\gcirc)\right)$ decreases strictly in lexicographic order.

As we have seen in Lem.~\ref{lem:fusion_instability}, a merger pattern cannot be created, and is destroyed on collision. We only need to prove that $d_{fu}((\hs \inter)^n \gcirc)$ decreases strictly when $p\left({\hs } \inter)^n \gcirc\right)$ remains constant. Let $u,v$ be two vertices of a merger pattern and $p$ a particle such that $u,v$ and $p$ realise the distance $d_{f}(\gcirc)$. Two cases can occur, either $p$ is itself part of a merger pattern, in which case there are no particles between the two merger patterns, or $p$ is free moving, in which case there are only particles going in the opposite direction between $p$ and $u,v$. In the first case, the two perform a fusion and there are no particles between the two merger patterns (so there is no division); the distance between the two patterns therefore decreases by $1$. In the second case, the particle $p$ will move towards the merger pattern, decreasing $d_{fu}((\hs \inter)^n \gcirc)$.

\sloppy In order to preserve the readability of the notations, we will denote $(u_n)$ the sequence $\left(|V((\hs \inter)^n(\gcirc))|\right)$. Thanks to the previous point, we know that there is a time step $m\in \N$ from which each collision of particles will cause the creation of an additional vertex, so the sequence $u_n$ is necessarily increasing for all $n>m$. Since $\gcirc$ contains at least one particle of each type, we have that for all $n \geq m$, the evolution of $(\hs \inter)^n \gcirc$ during $u_n$ time steps causes at least one collision. Similarly, we know that at most $c=2n_an_b$ collisions occur in the same time frame where $n_a$ (resp. $n_b$) is the number of particles on the port $a$ (resp. $b$). This allows us to obtain the following inequalities:
\begin{equation}
 u_n + 1 \leq u_{n+u_n} \leq u_n + c
\label{main_eq}
\end{equation}

Let $(v_k)_{k \in \N}$ be the sub-sequence such that $v_0=u_m$ and for all $k \in \N$, $v_{k+1}=u_{\ind{k} + v_k}$, where $\ind{k}$ is the function such that $\ind{0}=m$ and for all $k \in \N, \ind{k}=\sumofv{k-1}$.
By induction, we prove that $u_{\ind{k}}=v_k$ :
\begin{equation}
u_{\ind{k+1}} = u_{\sumofv{k}} =  u_{v_k + \sumofv{k-1}} = u_{\ind{k} + v_k} = v_{k+1}
\label{indice_eq}
\end{equation}
This allows us to apply the inequality \eqref{main_eq} on $v_{k+1}=u_{\ind{k} + v_k}$. Combining \eqref{main_eq} and \eqref{indice_eq}, we obtain the linear growth of $(v_k)_{k \in \N}$:
\begin{equation}
v_k+1 \leq v_{k+1} = u_{\ind{k}+v_k} \leq v_k+c
\label{v_growing_eq}
\end{equation}

Let us now focus on the growth of the index of $(v_k)_{k \in \N}$. By applying the inequalities of \eqref{v_growing_eq} to the definition of $\ind{k}$ we obtain :

\begin{align}
\sum_{i=0}^k (v_0+i) &\leq \sum_{i=0}^k (v_i) \leq \sum_{i=0}^k (v_0 + ci) \\
ku_m + \frac{k(k-1)}{2} &= \sum_{i=0}^k (v_0+i) \leq \ind{k} \\
&\leq \sum_{i=0}^k (v_0+ci) = ku_m + c\frac{k(k-1)}{2} \label{quadratic_ind_eq}
\end{align}

To conclude, let us return to the main sequence $(u_n)_{n \in \N}$. For a sufficiently large $n$, there exists $k \geq 4u_m + c$ such that :
\begin{align} \ind{k} \leq n \leq \ind{{k+1}}\end{align}
This gives us, considering that $(u_n)_{n\in \N}$ is increasing, the previous equation \eqref{quadratic_ind_eq} and that $k \geq 4u_m + c$ the following inequalities:
\begin{align}
  \ind{k} &\leq n \leq \ind{{k+1}} \\
\implies  ku_m + \frac{k(k-1)}{2}  &\leq n \leq (k+1)u_m + c\frac{k(k+1)}{2} \\
\implies  k(2u_m + k - 1) &\leq 2n \leq k(4u_m + c + ck)\\ 
 \implies  2k^2 &\leq 2n \leq (c+1)k^2\\ 
\implies  k &\leq \sqrt{n} \leq \sqrt{\frac{(c+1)}{2}}k\\ 
\implies   \sqrt{\frac{2}{c+1}}\sqrt{n}&\leq k \leq \sqrt{n} \label{k_bound_eq}
\end{align}

Since the sequence $u_n$ is increasing, and using the inequalities \eqref{k_bound_eq} and \eqref{v_growing_eq}, we can conclude with the following inequalities:
\begin{align}  u_m + \sqrt{\frac{2}{c+1}}\sqrt{n} &\leq u_m + k \leq v_k \leq u_n \\
&\leq v_{k+1} \leq u_m + c(k+1)\\
&\leq u_m + c + c \sqrt{n} \end{align}

Thus, $u_n$ is of the order of $(\sqrt{n})$.
\end{proof}

\section{Proof of Th.~\ref{th:loiconservationchoc}}\label{sec:loiconservationchocproof}

\begin{proof}
The function $\energieb$ is defined by : \begin{align}  \energieb(\gcirc) &:=\!\!\!\sum_{u \in V(\gcirc)} \!\!\bitv{a}{1}{u}\bitv{a}{1}{u.a}\bitv{b}{1}{u.a}+\bitv{b}{1}{u}\bitv{a}{1}{u.b}\bitv{b}{1}{u.b}\\ 
&-\sum_{u \in V(\gcirc)} \left(\bitv{a}{2}{u}+\bitv{b}{2}{u}
\right)\end{align}

This time, it amounts to counting the number of matter particles such that the neighbour on the same port contains two matter particles and deducing the number of \heat particles. Following this intuition, we introduce the localised energy functions $\energie$ and $\energieb$ such that for all $\gcirc \in \setcycle{2}$ and for all $u\in \gcirc$ :
\begin{align} \energie(\gcirc,u) &= \bitv{a}{1}{u}\bitv{a}{1}{u.b}\bitv{b}{1}{u.b}\\ 
&+ \bitv{b}{1}{u}\bitv{b}{1}{u.a}\bitv{a}{1}{u.a} \\
&- \bitv{a}{2}{u} -  \bitv{b}{2}{u} \end{align}
   and
   \begin{align} \energieb(\gcirc,u) &= \bitv{a}{1}{u}\bitv{a}{1}{u.a}\bitv{b}{1}{u.a}\\ &+ \bitv{b}{1}{u}\bitv{b}{1}{u.b}\bitv{a}{1}{u.b}\\ &- \bitv{a}{2}{u} -  \bitv{b}{2}{u} \end{align}
This notation is justified by the equalities:
\begin{align}  \energie(\gcirc) := \sum_{u \in V(\gcirc)}  \energie(\gcirc,u)\end{align}
and
\begin{align} \energieb(\gcirc) := \sum_{u \in V(\gcirc)} \energieb(\gcirc,u)\end{align}
Let us prove that for all $\gcirc$, we have $\energie(\gcirc) = \energieb(\damp(\gcirc)) = \energie(\hs\interc\damp(\gcirc)) $.

First, let us prove that $\energie(\gcirc,u)=\energieb(\damp\gcirc,u)$. By definition, $\damp$ acts only in the neighbourhood of vertices $u$ such as $\bitv{a}{1}{u}=\bitv{b}{1}{u}=1$. Moreover, when $\dampl$ and $\dampr$ act on the same vertex $v$, we have by symmetry of the ports $a$ and $b$ that the vertex $i$ is left unchanged by $\damp$. By the same symmetry, we can assume without loss of generality that only $\dampl$ affects vertex $u$, which gives us the following calculations:
\newcommand{\scaleem}{2}
\begin{widetext}
$$
\begin{array}{rccrl}
E\left(\raisebox{-10pt}{\includegraphics[scale= \scaleem]{figures/Time_arrow/energy_conservation_proof/motif1.pdf}}\right) &=& 1 - (x+y+z) &=& E'\left(\raisebox{-10pt}{\includegraphics[scale= \scaleem]{figures/Time_arrow/energy_conservation_proof/motif2.pdf}}\right)\\

E\left(\raisebox{-10pt}{\includegraphics[scale= \scaleem]{figures/Time_arrow/energy_conservation_proof/motif2.pdf}}\right)&=&-(x+y+z)&=&E'\left(\raisebox{-10pt}{\includegraphics[scale= \scaleem]{figures/Time_arrow/energy_conservation_proof/motif4.pdf}}\right)\\

E\left(\raisebox{-10pt}{\includegraphics[scale= \scaleem]{figures/Time_arrow/energy_conservation_proof/motif4.pdf}}\right) &=& -(x+y+z+1) &=& E'\left(\raisebox{-10pt}{\includegraphics[scale= \scaleem]{figures/Time_arrow/energy_conservation_proof/motif3.pdf}}\right)\\

E\left(\raisebox{-10pt}{\includegraphics[scale= \scaleem]{figures/Time_arrow/energy_conservation_proof/motif3.pdf}}\right) &=& -(x+y+z) &=& E'\left(\raisebox{-10pt}{\includegraphics[scale= \scaleem]{figures/Time_arrow/energy_conservation_proof/motif1.pdf}}\right)\\

\end{array}$$

In summary, we have that $E(\gcirc) = E'( \damp (\gcirc))$. Similarly, for $\hs$ we get :

$$
\begin{array}{rcl}
E'\left(\raisebox{-10pt}{\includegraphics[scale=\scaleem]{figures/Time_arrow/energy_conservation_proof/motif5.pdf}}\right)&=& m_1+m_2 - (c_1 + c_2 + c_3 + c_4)\\
 &=& E\left(\raisebox{-10pt}{\includegraphics[scale=\scaleem]{figures/Time_arrow/energy_conservation_proof/motif6.pdf}}\right)
\end{array}$$

And so we get that $E'(\gcirc)=E(\hs(\gcirc))$. Combining all this, we obtain that $E(\gcirc) = E'( \damp (\gcirc))=E(\hs\damp(\gcirc))$
\end{widetext}
\end{proof}

\section{Variable radius entropy}\label{sec:variableentropy} 

To still see a growth in local entropy, in spite of expansion, we need to compensate for the loss of particle density. There are two simple ways to do this. The first is to directly increase the radius of observation in proportion to the size of the graph, so that each disk always represents the same proportion of the graph, which will preserve the particle density. %
The other way is to use the names as markers to split the configurations into different induced subgraphs.

For the first case, it is sufficient to pose for $r \in \left[0,1\right]$ :
\begin{align} S'_r(X) := \frac{1}{|V(X)|}\sum_{u\in V(X)} S(X_u^{\lceil r\times |V(X)| \rceil})\end{align}
This gives us an increasing entropy in the case of $\hs\inter$, $(\hs\inter)^{-1}$ and $\hs\interb$ as can be seen in Fig. ~\ref{avg_entropy_variable_window_05}.

\begin{figure}\centering
\begin{subfigure}{0.95\columnwidth}
\centering
\includegraphics[width=0.95\columnwidth]{script_figure_these/sqrtSH/SH_classic_Entropie_moyenne0.05_100_en.pdf}
\caption{\emph{Average $r=0.05$-local entropy of the graphs.}}
\label{avg_entropy_variable_window_05}
\end{subfigure}\\
\begin{subfigure}{0.95\columnwidth}
\centering
\includegraphics[width=0.95\columnwidth]{script_figure_these/sqrtSH/SH_classic_Entropie_moyenne_named_100_en.pdf}
\caption{\emph{Average name-local entropy of the graphs for windows of size $5$ initially.}}
\label{avg_entropy_variable_window}
\end{subfigure}
\caption{\emph{Typical of two modified average local entropy for the dynamics $\hs\inter$.} The initial condition is drawn uniformly at random amongst graphs of size $100$.}
\end{figure}

Although it compensates for the expansion of the graph, this entropy makes little physical sense: a node creation on the opposite side of the graph can increase the size of a local window. %
To get rid of this problem, it is sufficient to use the name algebra as a coordinate system, and use them to determine whether two vertices belong to the same window. As the names of the vertices are in the algebra $\mathcal{V}$, they can be seen as finite binary trees labelled by $\N.\bint$. It is therefore sufficient to state that a window consists of the set of vertices whose leftmost child intersects the same integer. More formally, for all $i \in \N$, we pose: \begin{align} V_i(X)=\{u|u\in V(X) \text{ et } \exists m \in \N  \text{ tel que } u.l^m \wedge i = u.l^m\}\end{align}

The average local entropy can then be defined as :
\begin{align} S_\mathcal{N}(X) := \frac{1}{|V(X)|} \sum_{i\in \N} S(X^0_{V_i}) \end{align}
 where $n$ is the number of integers composing the names of $X$ and $X^0_{V_i}$ is the subgraph induced by the set of vertices in $V_i(X)$.

This definition seems at first sight very restrictive, but it is important to remember that the dynamics of named graphs switch with any renaming. Thus, to observe how the local entropy evolves according to a different partitioning, we just need to rename the graph appropriately, and observe the entropy of the resulting graph.

Note that in this definition of entropy, local windows do not intersect. Moreover, a window only increases in size when the space inside it grows. This also makes physical sense to define entropy in this way: it is like placing a coordinate system in space and letting the dynamics distort it.

As for the previous definition of entropy, we always observe a local growth of entropy for the dynamics mentioned above, illustrated in Fig. ~\ref{avg_entropy_variable_window}.

\end{document}